\documentclass[12pt,onecolumn, draftclsnofoot]{IEEEtran}
\usepackage{amssymb}
\usepackage{graphicx}
\usepackage{epstopdf}
\usepackage[tbtags]{amsmath}
\usepackage{amsfonts}
\usepackage{mathtools}
\usepackage{mathrsfs}
\usepackage{amsthm}
\usepackage{xcolor}
\usepackage{cite,url}
\usepackage{upquote}
\usepackage{pifont}
\usepackage{amsmath}
\usepackage{mathtools}

\usepackage{units}
\usepackage{booktabs}
\usepackage{multirow}
\usepackage{array}

\newtheorem{thm}{Theorem}
\newtheorem{theorem}[thm]{Theorem}
\newtheorem{lemma}[thm]{Lemma}
\newtheorem{definition}[thm]{Definition}

\newtheorem{corollary}[thm]{Corollary}

\newtheorem{example}{Example}

\newtheorem{remark}{Remark}
\begin{document}
\title{Quantum Relative $\alpha$-Entropies: A Structural and Geometric Perspective}


 \author{%
   \IEEEauthorblockN{Sayantan Roy\IEEEauthorrefmark{1},
                    Atin Gayen\IEEEauthorrefmark{2},
                    Aditi Kar Gangopadhyay\IEEEauthorrefmark{3},
                   and Sugata Gangopadhyay\IEEEauthorrefmark{4}}
                 \vspace{0.4cm}
                 
  \IEEEauthorblockA{\IEEEauthorrefmark{1}%
                    \small{Department of Mathematics,
                   Indian Institute of Technology Roorkee, Roorkee, 247677, India,
                    roysayantan3024@gmail.com}}
                    \\
   \IEEEauthorblockA{\IEEEauthorrefmark{2}%
                   \small{Department of Computer Science and Engineering,
                   Indian Institute of Technology Roorkee, Roorkee, 247677, India,
                     atinfordst@gmail.com}}\\
   \IEEEauthorblockA{\IEEEauthorrefmark{3}%
                     \small{Department of Mathematics,
                   Indian Institute of Technology Roorkee, Roorkee, 247677, India,
            aditi.gangopadhyay@ma.iitr.ac.in}}\\
   \IEEEauthorblockA{\IEEEauthorrefmark{4}%
                     \small{Department of Computer Science and Engineering,
                   Indian Institute of Technology Roorkee, Roorkee, 247677, India,
            sugata.gangopadhyay@cs.iitr.ac.in}}
}

\maketitle


\begin{abstract}
Most quantum divergences derive their structure from classical $f$-divergences or R\'enyi-type constructions, a dependence that obscures several quantum geometric effects.
We introduce a quantum relative-$\alpha$-entropy that extends Umegaki’s relative entropy while falling outside the $f$-divergence class. The proposed divergence exhibits a nonlinear convexity property, which yields a generalized convexity result for the Petz-R\'enyi divergence for $\alpha>1$, complementing the known convexity for $\alpha<1$. It is additive under tensor products, invariant under unitary transformations, and depends only on the relative geometry of quantum states rather than their absolute magnitudes. Using Nussbaum-Szko\l{}a-type distributions, we also establish an exact correspondence of this divergence with classical relative-$\alpha$-entropy. This reveals relative-$\alpha$-entropy as a fundamentally geometric notion of quantum distinguishability not captured by existing divergence frameworks.
\end{abstract}
\begin{IEEEkeywords}
	\noindent Non-linear Convexity, Nussbaum-Szko\l{}a Distributions, Operator Inequality, Quantum $f$-divergence, Relative-$\alpha$-entropy
\end{IEEEkeywords}



\section{Introduction}
\label{2sec:introduction}
Quantum information divergences quantify the distinguishability between quantum states and play a central role across quantum information theory, including quantum cryptography, computation, and learning \cite{Marcol2015}. Conceptually, they extend classical information-theoretic distances to the non-commutative setting of quantum mechanics.

Let $\mathcal{H}$ be an $n$-dimensional complex Hilbert space. A quantum state on $\mathcal{H}$ is described by a density matrix $\rho$, that is, a positive semi-definite, Hermitian operator with unit trace. Pure states correspond to rank-one projections, while mixed states represent statistical ensembles of pure states. By the spectral theorem, any density matrix admits the decomposition
\begin{equation}
\label{rho}
\rho = \sum_{i=1}^n \lambda_i \, |x_i\rangle \langle x_i|,
\end{equation}
where $\{\lambda_i\}_{i=1}^n$ are non-negative eigenvalues summing to one, and $\{|x_i\rangle\}_{i=1}^n$ form an orthonormal basis of $\mathcal{H}$. The set of all density matrices on $\mathcal{H}$ is convex, with pure states as its extreme points \cite{Denes2007}.

Quantum divergences provide quantitative measures of distance between quantum states. While any operator norm induces a notion of distance, information-theoretic applications require divergences that reflect statistical and operational structure \cite{Nielsen00}. The most prominent example is Umegaki’s quantum relative entropy \cite{Umegaki62}, defined for density operators $\rho$ and $\sigma$ by
\begin{equation}
    \label{Umegaki}
    U(\rho \| \sigma) = \operatorname{Tr}(\rho \log \rho - \rho \log \sigma).
\end{equation}
This quantity is nonnegative and vanishes if and only if $\rho=\sigma$. Further, it is finite whenever $\operatorname{supp}(\rho) \subseteq \operatorname{supp}(\sigma)$. Here supp($\rho$) denotes the support of any density matrix $\rho$, which is defined to be the span of the eigenvectors of $\rho$ corresponding to its non-zero eigenvalues. Any vector orthogonal to supp($\rho$) lies within the kernel of $\rho$, hence
$U(\rho||\sigma) = +\infty$ when $\operatorname{supp}(\rho)\cap \operatorname{ker}(\sigma) \neq\{0\}$. 

Quantum relative entropy plays a central role in quantum information theory and quantum statistical learning. Its operational significance was established in quantum hypothesis testing \cite{OgawaNagaoka00}, and it underlies several fundamental information measures, including the von Neumann entropy
\begin{equation}
    \label{vonNeumann}
    S(\rho) = - \operatorname{Tr}(\rho \log \rho),
\end{equation}
as well as conditional entropy and coherent information \cite{Ohya2010,Bluhm2022}. For example,
\[
U(\rho \| \sigma) = C(\rho,\sigma) -S(\rho),
\]
where $C(\rho,\sigma)$ is known as the cross entropy that satisfies $C(\rho,\rho)= S(\rho)$. In this sense, Umegaki’s relative entropy serves as a structural cornerstone of quantum information theory.

Consider a quantum state $\rho$ with spectral decomposition
$\rho = \sum_{i=1}^n \lambda_i |x_i\rangle\langle x_i|$.
Its von Neumann entropy admits the eigenvalue representation
\begin{equation}
    S(\rho) = -\sum_{i=1}^n \lambda_i \log \lambda_i,
\end{equation}
which is formally analogous to the Shannon entropy
\begin{equation}
    H(p) = -\sum_{x\in\mathbb{S}} p(x)\log p(x)
\end{equation}
of a classical probability distribution $p$ supported on a finite set
$\mathbb{S}\subseteq\mathbb{R}$.

A similar correspondence holds for Umegaki's relative entropy.
When $\rho$ and $\sigma$ commute, they are simultaneously diagonalizable,
and \eqref{Umegaki} reduces to the classical Kullback-Leibler (KL) divergence
between their eigenvalue distributions:
\begin{equation}
\label{KL}
U(\rho\|\sigma)
= \sum_{i=1}^n \lambda_i
\log\!\left(\frac{\lambda_i}{\delta_i}\right)
=: \mathrm{KL}(\lambda\|\delta),
\end{equation}
where $\{\lambda_i\}$ and $\{\delta_i\}$ denote the eigenvalues of $\rho$
and $\sigma$, respectively \cite{Kullback51,CsiszarS04B}.
This reduction follows directly from the spectral (Schmidt) decompositions
of the commuting density matrices \cite{zhang11}.

In classical information theory and statistics, several generalizations of the
KL-divergence have been proposed to address problems arising
in noisy communication channels, hypothesis testing, and robust inference \cite{Pardo06B, BasuSP11B}.
One of the most prominent among these is the R\'enyi divergence, which replaces
the logarithmic structure of \eqref{KL} with power functions
\cite{BasuBC97J,Ervan2014}.
Let $p$ and $q$ be two probability distributions with common support
$\mathbb{S}\subseteq\mathbb{R}$, and let $\alpha>0$, $\alpha\neq 1$.
The R\'enyi divergence of order $\alpha$ is defined as \cite{Renyi61J}
\begin{equation}
\label{RenyiClassical}
R_\alpha(p\|q)
:= \frac{1}{\alpha-1}
\log \sum_{x\in\mathbb{S}} p(x)^{\alpha} q(x)^{1-\alpha}.
\end{equation}
It is well known that $R_\alpha(p\|q)\to \mathrm{KL}(p\|q)$ as $\alpha\to 1$.
Moreover, R\'enyi divergence belongs to the broader class of
Csisz\'ar $f$-divergences (as a monotone function of $\alpha$-Hellinger divergence) \cite{Csiszar1967, GayenK21J, BasuSP11B, CressieR84J}.

Despite its wide applicability, the use of R\'enyi divergence is often
technically challenging due to the presence of nonlinear powers in both
arguments, particularly in inference and optimization problems
\cite{GayenK21J,KumarS15J2,GayenRG24J}.
This has motivated the introduction of an alternative generalization,
known as the relative-$\alpha$-entropy
\cite{Sundaresan2002,Sundaresan2007,KumarS15J1,KumarS15J2},
defined as
\begin{equation}
\label{Jones}
\begin{aligned}
J_{\alpha}(p\|q)
&= \frac{\alpha}{1-\alpha}
\log \sum_{x\in\mathbb{S}} p(x) q(x)^{\alpha-1}
- \frac{1}{1-\alpha}
\log \sum_{x\in\mathbb{S}} p(x)^{\alpha}  + \log \sum_{x\in\mathbb{S}} q(x)^{\alpha}.
\end{aligned}
\end{equation}
Unlike R\'enyi divergence, the relative-$\alpha$-entropy involves a linear power
of the first argument (in the product term), however, coincides with the KL divergence in the limit
$\alpha\to 1$.
Furthermore, relative-$\alpha$-entropy is closely related to R\'enyi divergence
through the so-called $\alpha$-escort transformation
$p \mapsto p^{(\alpha)}$, where
$p^{(\alpha)}(x) := p(x)^{\alpha} \big/ \sum_{y\in\mathbb{S}} p(y)^{\alpha}$
\cite{KumarS15J1}.

Our main contributions are summarized as follows.
\vspace{0.1cm}

\begin{itemize}

\item[1.] \textbf{Quantum Relative-$\alpha$-Entropy Beyond the $f$-Divergence Framework:}
We introduce a new quantum divergence, termed the \emph{quantum relative-$\alpha$-entropy}, which constitutes a new generalization of Umegaki’s relative entropy \eqref{Umegaki}. The proposed divergence is parameterized by $\alpha>0$, $\alpha\neq1$, and recovers Umegaki’s relative entropy in the limit $\alpha\to1$. Unlike standard R\'enyi-type constructions, it lies strictly outside the class of quantum $f$-divergences, while retaining several fundamental structural properties.

\item[2.] \textbf{Nonlinear Generalized Convexity and Geometric Structure:}
We show that the quantum relative-$\alpha$-entropy fails to be jointly convex in the usual linear sense. Motivated by its intrinsic multiplicative structure, we introduce a notion of nonlinear generalized convexity adapted to the geometry of the divergence. Within this framework, we establish convexity properties that are not captured by classical formulations. As an application, we derive a generalized convexity result for the Petz-R\'enyi relative entropy in the regime $\alpha>1$, complementing the well-known linear convexity for $\alpha<1$.

\item[3.] \textbf{Structural and Operational Distinctions from R\'enyi-Type Divergences:}
We investigate the relationship between the proposed divergence and existing R\'enyi-type quantum divergences, including the Petz-R\'enyi and sandwiched R\'enyi divergences. We identify several fundamental differences, particularly with respect to monotonicity properties and the data-processing inequality. Through explicit examples, we demonstrate that the quantum relative-$\alpha$-entropy exhibits behavior that is qualitatively distinct from standard R\'enyi-type divergences.

\item[4.] \textbf{Classical-Quantum Correspondence via Nussbaum-Szko\l{}a Distributions:}
We establish a precise correspondence between the quantum relative-$\alpha$-entropy and the classical relative-$\alpha$-entropy by means of Nussbaum-Szko\l{}a distributions associated with a pair of quantum states. This result provides an exact reduction of the quantum divergence to its classical counterpart, thereby offering a unified geometric perspective on classical and quantum notions of distinguishability.

\item[5.] \textbf{A Quantum Bregman-Type Density Power Divergence and Structural Comparison:}
Motivated by the classical density power divergence, we introduce a quantum divergence of Bregman type in the operator setting. This construction can be viewed as a log-free counterpart of the quantum relative-$\alpha$-entropy, obtained by removing the outer logarithmic transformation from its defining expression. We analyze its structural properties and compare it systematically with the quantum relative-$\alpha$-entropy. While the two divergences share a common algebraic backbone, they exhibit distinct geometric and monotonicity behaviors, thereby highlighting the structural role played by the logarithmic transformation in quantum divergence theory.
\end{itemize}

The remainder of the paper is organized as follows. 
Section~\ref{3sec:Motivation} reviews R\'enyi-type quantum divergences and quantum $f$-divergences from the literature. 
In Section~\ref{4sec:QuantumRelativeEntropy}, we introduce the quantum relative-$\alpha$-entropy and establish its fundamental properties. 
Section~\ref{5sec:Connectionsdiscussions} examines its connections with existing quantum information measures and highlights key structural distinctions. 
In Section~\ref{5sec:NZDistributions}, we establish the exact correspondence between the quantum and classical relative-$\alpha$-entropies via Nussbaum-Szko\l{}a distributions. 
Section~\ref{6sec:QuantumDPD} introduces a log-free quantum density power divergence inspired by classical analogues and compares its properties with those of the quantum relative-$\alpha$-entropy. 
Finally, Section~\ref{7sec:summary} concludes the paper with a summary and discussion of future directions.

\section{Background of the Problem}
\label{3sec:Motivation}

In this section, we review some well-known generalizations of quantum relative entropy relevant to the present work, with an emphasis on their structural and operational features.

Umegaki's relative entropy \eqref{Umegaki} was introduced as the quantum analogue of the KL divergence \eqref{KL} in \cite{Kullback51}. Araki subsequently provided a formulation within the framework of relative modular operators \cite{Araki1975, Araki2005}, placing the definition on firm operator-algebraic foundations. In analogy with R\'enyi's extension of the classical KL divergence \cite{Renyi61J}, Petz and Ohya generalized Araki's construction to obtain the Petz--R\'enyi-$\alpha$ relative entropy \cite{Petz1986}. For two quantum states $\rho$ and $\sigma$, it is defined by
\begin{equation}
\label{RenyiRelativeentropy}
\Hat{D}_{\alpha}(\rho\|\sigma)
=
\frac{1}{\alpha - 1}
\log \operatorname{Tr}(\rho^{\alpha}\sigma^{1-\alpha}),
\end{equation}
for $\alpha > 0$, $\alpha \neq 1$.

A further modification, known as the sandwiched R\'enyi divergence, was introduced in \cite{MullerMartin2013} and is given by
\begin{equation}
\label{Sandwiched}
D_{\alpha}^{*}(\rho\|\sigma)
=
\frac{1}{\alpha - 1}
\log
\operatorname{Tr}
\left[
\left(
\sigma^{\frac{1-\alpha}{2\alpha}}
\rho
\sigma^{\frac{1-\alpha}{2\alpha}}
\right)^{\alpha}
\right].
\end{equation}
Both divergences admit well-defined extensions to the limiting cases $\alpha \to 0, 1,$ and $+\infty$ \cite{MullerMartin2013, Datta15}. For example,
\begin{itemize}
    \item[1.] when $\alpha\to 1$, they both coincide with Umegaki's relative entropy \eqref{Umegaki}.
    \item[2.] when $\alpha\to \infty$, they coincide with the max-relative entropy $D_{max}(\rho||\sigma)$, defined in \cite{Datta2009} as 
    \begin{equation*}
        D_{max}(\rho||\sigma) := \log \min\{\lambda : \rho \leq \lambda \sigma\}.
    \end{equation*}
\end{itemize}

 They play a central role in quantum information theory and possess distinct operational interpretations, particularly in quantum hypothesis testing and asymptotic error exponents \cite{Nussbaum2009, Hiai2017, Hiai2018}. Importantly, although they coincide in certain parameter regimes, their structural properties, such as monotonicity under quantum channels and convexity behavior, differ in essential ways.

Beyond R\'enyi-type quantities, a broader class of divergences arises from quantum analogues of the classical Csisz\'ar $f$-divergence. Let $f : (0,+\infty) \to \mathbb{R}$ be convex. For two quantum states $\rho$ and $\sigma$, the standard quantum $f$-divergence is defined as
\begin{equation}
\label{standard_f}
S_f(\rho\|\sigma)
=
\langle \rho^{1/2},\, f(\Delta(\sigma,\rho))\, \rho^{1/2} \rangle,
\end{equation}
where $\Delta(\sigma,\rho)$ denotes the relative modular operator and
$\langle \rho,\sigma \rangle = \operatorname{Tr}(\rho^* \sigma)$ with $\rho^*$ being the adjoint of $\rho$. This construction extends the classical Csisz\'ar $f$-divergence
\begin{equation}
\label{f-divergence}
D_f(p\|q)
=
\sum_{x \in \mathbb{S}}
q(x)
f\!\left(\frac{p(x)}{q(x)}\right),
\end{equation}
defined for probability distributions $p$ and $q$ with common support $\mathbb{S}$.

As in the classical setting, the quantum $f$-divergence includes several important divergences as special cases. In particular, Umegaki's relative entropy \eqref{Umegaki} and the Petz-R\'enyi relative entropy \eqref{RenyiRelativeentropy} can be recovered from \eqref{standard_f} for suitable choices of $f$. At the same time, not all quantum divergences, most interestingly the sandwiched R\'enyi divergence, fit into this framework. The coexistence of these inequivalent extensions highlights the structural diversity of quantum relative entropies and motivates further investigation into alternative formulations and their properties. This perspective underlies the developments considered in the present work.

\section{Quantum Relative $\alpha$-Entropy : Definition and Properties}
\label{4sec:QuantumRelativeEntropy}

The Csisz\'ar $f$-divergence and Bregman divergence families represent two central frameworks for classical information divergences, with the KL divergence being an important member of both \cite{KumarM20J}. However, the relative-$\alpha$-entropy $J_\alpha(p||q)$, another generalization of the KL divergence, lies outside these families while still retaining several fundamental properties of interest that are significant in Information Theory and Statistical Learning \cite{Sundaresan2002, Sundaresan2007, JonesHHB01J, KumarS15J2, GayenK21J, GayenK23J}.

Motivated by this structure, in this section, we propose a class of quantum divergences that generalizes Umegaki’s relative entropy \eqref{Umegaki}, and also falls outside the quantum $f$-divergence class \eqref{standard_f}. We establish key properties of this class, including additivity under tensor products, unitary invariance, convexity, and so on. We also list out some of its properties that are unique to this class. We start by recalling some of the necessary basic definitions from Operator Theory \cite{Parthasarathy06, Denes2007, zhang11}.

Let $\mathcal{H}$ be a finite-dimensional Hilbert space with $\dim(\mathcal{H}) = n$, and let $\mathcal{B}(\mathcal{H})$ denote the algebra of all bounded linear operators on $\mathcal{H}$. For any $X \in \mathcal{B}(\mathcal{H})$, its adjoint is denoted by $X^{*}$ and is defined through the relation
\[
\langle u, Xv \rangle = \langle X^{*}u, v \rangle, \qquad u,v \in \mathcal{H}.
\]
The Hilbert-Schmidt inner product on $\mathcal{B}(\mathcal{H})$ is given by
\[
\langle X, Y \rangle = \operatorname{Tr}(X^{*}Y).
\]
For $X \in \mathcal{B}(\mathcal{H})$, we define the Schatten $p$-norm as
\begin{equation}
\label{pnorm1}
\|X\|_{p} := \big( \operatorname{Tr}|X|^{p} \big)^{1/p},
\end{equation}
where $p \geq 1$ and $|X| := (X^{*}X)^{1/2}$. This definition extends naturally to negative values of $p$. Moreover, the case $p = \infty$ is defined by
\[
\|X\|_{\infty} := \lim_{p \to \infty} \|X\|_{p}.
\]
For $1 \leq p \leq \infty$, $\|\cdot\|_{p}$ defines a norm on $\mathcal{B}(\mathcal{H})$ and satisfies the triangle inequality.

In particular, for a density operator $\rho$ with spectral decomposition
\[
\rho = \sum_{i=1}^{n} \lambda_i \, |x_i\rangle \langle x_i|,
\]
the Schatten $p$-norm reduces to
\begin{equation*}
\|\rho\|_{p} = \left( \sum_{i=1}^{n} \lambda_i^{p} \right)^{1/p}.
\end{equation*}

\subsection{Proposal of Quantum Relative-$\alpha$-Entropy}

\begin{definition}
\label{Defn:Divergence}
Let $\alpha > 0$, $\alpha \neq 1$. For two density operators $\rho$ and $\sigma$ acting on a finite-dimensional Hilbert space, the \emph{quantum relative $\alpha$-entropy} is defined as
\begin{equation}
\label{quantumjonesetal}
S_{\alpha}(\rho \| \sigma)
= \frac{\alpha}{1-\alpha} \log \operatorname{Tr}\!\left( \rho \sigma^{\alpha-1} \right)
- \frac{1}{1-\alpha} \log \operatorname{Tr}\!\left( \rho^{\alpha} \right)
+ \log \operatorname{Tr}\!\left( \sigma^{\alpha} \right),
\end{equation}
whenever $\operatorname{supp}(\rho) \subseteq \operatorname{supp}(\sigma)$. Otherwise, we set
\[
S_{\alpha}(\rho \| \sigma) := +\infty.
\]
\end{definition}

Throughout this paper, we adopt the following conventions:
\[
0 \cdot (\pm\infty) = 0, 
\qquad 
\log 0 = -\infty, 
\qquad 
\log(+\infty) = +\infty.
\]

Using the linearity of the trace operator, that is, $\operatorname{Tr}(cA) = c\,\operatorname{Tr}(A)$ for any scalar $c$, the expression in \eqref{quantumjonesetal} can be equivalently rewritten as
\begin{align}
\label{1:quantumjones1}
S_{\alpha}(\rho \| \sigma)
&= \frac{\alpha}{1-\alpha} \log \operatorname{Tr}\!\left( \rho \sigma^{\alpha-1} \right)
   - \frac{1}{1-\alpha} \log \operatorname{Tr}\!\left( \rho^{\alpha} \right)
   + \log \operatorname{Tr}\!\left( \sigma^{\alpha} \right) \nonumber \\
&= \frac{\alpha}{1-\alpha}
   \log \operatorname{Tr}\!\left[
   \rho \sigma^{\alpha-1}
   \left( \operatorname{Tr}\rho^{\alpha} \right)^{-1/\alpha}
   \left( \operatorname{Tr}\sigma^{\alpha} \right)^{-(\alpha-1)/\alpha}
   \right] \nonumber \\
&= \frac{\alpha}{1-\alpha}
   \log \operatorname{Tr}\!\left[
   \frac{\rho}{\|\rho\|_{\alpha}}
   \left( \frac{\sigma}{\|\sigma\|_{\alpha}} \right)^{\alpha-1}
   \right],
\end{align}
where $\|\cdot\|_{\alpha}$ denotes the Schatten $\alpha$-norm \eqref{pnorm1}.

Now we motivate the similarity between the expressions of the proposed quantum relative
$\alpha$-entropy $S_{\alpha}$ and the classical relative-$\alpha$-entropy
$J_{\alpha}$. To this end, we first establish the following lemma.

\begin{lemma}
\label{Lemma:traceequation}
Let $\rho$ and $\sigma$ be two density operators with respective spectral decompositions
\begin{equation}
\label{decompositions}
\rho = \sum_{i=1}^{n} p_i \, |x_i\rangle \langle x_i|,
\qquad
\sigma = \sum_{j=1}^{n} q_j \, |y_j\rangle \langle y_j|,
\end{equation}
where $\sum_{i=1}^{n} p_i = \sum_{j=1}^{n} q_j = 1$ and $p_i,q_j \geq 0$ for all
$i,j$. Then, for $\alpha > 0$,
\begin{equation}
\label{1eq:traceeq}
\operatorname{Tr}\!\left( \rho \sigma^{\alpha-1} \right)
= \sum_{i,j=1}^{n} p_i q_j^{\alpha-1}
\left| \langle x_i | y_j \rangle \right|^2 .
\end{equation}
\end{lemma}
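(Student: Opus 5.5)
The plan is to expand both operators in their spectral decompositions \eqref{decompositions} and compute the trace in the orthonormal basis $\{|x_k\rangle\}$. The first step is to make the functional calculus explicit. Since $\sigma$ is Hermitian with eigenvalues $q_j\ge 0$, we have
\[
\sigma^{\alpha-1}=\sum_{j=1}^n q_j^{\alpha-1}\,|y_j\rangle\langle y_j|.
\]
For $\alpha<1$ some $q_j$ may vanish, making $q_j^{\alpha-1}$ infinite. I would handle this as the paper's definition implicitly does: the power is taken on $\operatorname{supp}(\sigma)$, i.e.\ as a generalized inverse, with the sum restricted to $q_j>0$. Then both sides of \eqref{1eq:traceeq} run over the same index set; alternatively one can invoke the convention $0\cdot(\pm\infty)=0$ together with $\operatorname{supp}(\rho)\subseteq\operatorname{supp}(\sigma)$. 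This bookkeeping is the only delicate point. Apart from it the argument is pure linear algebra.

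Next I would multiply the two expansions:
\[
\rho\sigma^{\alpha-1}=\sum_{i,j} p_i q_j^{\alpha-1}\,|x_i\rangle\langle x_i|y_j\rangle\langle y_j|.
\]
By linearity of the trace, it then suffices to evaluate $\operatorname{Tr}(|x_i\rangle\langle y_j|)$. Using the cyclicity identity $\operatorname{Tr}(|u\rangle\langle v|)=\langle v|u\rangle$, or equivalently summing $\langle x_k|\cdot|x_k\rangle$ over the basis, this trace equals $\langle y_j|x_i\rangle$. Each term therefore contributes
\[
p_i q_j^{\alpha-1}\langle x_i|y_j\rangle\langle y_j|x_i\rangle = p_i q_j^{\alpha-1}\,|\langle x_i|y_j\rangle|^2,
\]
because $\langle y_j|x_i\rangle=\overline{\langle x_i|y_j\rangle}$. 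Summing over $i,j$ gives \eqref{1eq:traceeq}.

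Finally, I would remark that the terms of the double sum are nonnegative. Consequently, when $\operatorname{supp}(\rho)\not\subseteq\operatorname{supp}(\sigma)$ and $\alpha<1$, the right side is $+\infty$ under the stated conventions. This is consistent with Definition~\ref{Defn:Divergence} assigning $S_\alpha=+\infty$. I expect no real obstacle beyond making this zero-eigenvalue convention explicit.
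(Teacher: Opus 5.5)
Your proof is correct and follows the paper's argument: expand $\rho$ and $\sigma^{\alpha-1}$ spectrally, use linearity of the trace, and apply $\operatorname{Tr}(|u\rangle\langle v|)=\langle v|u\rangle$. Your extra care about zero eigenvalues of $\sigma$ when $\alpha<1$ covers a point the paper skips. Note, however, that your closing $+\infty$ remark holds only under the $0\cdot(\pm\infty)=0$ convention; under the generalized-inverse convention you proposed first, both sides stay finite and equal.
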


\begin{IEEEproof}
By the spectral theorem,
\[
\rho = \sum_{i=1}^{n} p_i |x_i\rangle \langle x_i|,
\qquad
\sigma^{\alpha-1} = \sum_{j=1}^{n} q_j^{\alpha-1} |y_j\rangle \langle y_j|.
\]
Therefore,
\begin{align*}
\operatorname{Tr}\!\left( \rho \sigma^{\alpha-1} \right)
&= \operatorname{Tr}\!\left(
\sum_{i,j} p_i q_j^{\alpha-1}
|x_i\rangle \langle x_i | y_j\rangle \langle y_j|
\right) \\
&= \sum_{i,j} p_i q_j^{\alpha-1}
\operatorname{Tr}\!\left(
|x_i\rangle \langle x_i | y_j\rangle \langle y_j|
\right).
\end{align*}
Using $\operatorname{Tr}(|u\rangle\langle v|)=\langle v|u\rangle$, we obtain
\[
\operatorname{Tr}\!\left(
|x_i\rangle \langle x_i | y_j\rangle \langle y_j|
\right)
= |\langle x_i | y_j \rangle|^2,
\]
which yields \eqref{1eq:traceeq}.
\end{IEEEproof}

It is immediate from the spectral decomposition that, for a density operator
$\rho$,
\[
\operatorname{Tr}(\rho^{\alpha}) = \sum_{i=1}^{n} p_i^{\alpha}.
\]
Consequently, the quantum relative $\alpha$-entropy defined in
\eqref{quantumjonesetal} admits the equivalent representation
\begin{align}
\label{quantumjones2}
S_{\alpha}(\rho \| \sigma)
&= \frac{\alpha}{1-\alpha}
\log \sum_{i,j=1}^{n}
p_i q_j^{\alpha-1}
|\langle x_i | y_j \rangle|^2
- \frac{1}{1-\alpha}
\log \sum_{i=1}^{n} p_i^{\alpha}
+ \log \sum_{j=1}^{n} q_j^{\alpha}.
\end{align}

\begin{remark}
The representation in \eqref{quantumjones2} highlights the close resemblance
between the quantum relative $\alpha$-entropy $S_{\alpha}(\rho \| \sigma)$ and
the classical relative-$\alpha$-entropy $J_{\alpha}(p \| q)$ defined in
\eqref{Jones}. In particular, when $\rho$ and $\sigma$ correspond to classical
states (that is, they commute and are diagonal in a common basis),
$S_{\alpha}(\rho \| \sigma)$ reduces exactly to $J_{\alpha}(p \| q)$.
\end{remark}

\begin{remark}
\label{Remark:finiteness}
The quantity $S_{\alpha}(\rho \| \sigma)$ is finite for all $\alpha > 0$,
$\alpha \neq 1$, if and only if
\[
\operatorname{supp}(\rho) \cap \operatorname{supp}(\sigma) \neq \emptyset.
\]
Moreover, for $\alpha \leq 1$, $S_{\alpha}(\rho \| \sigma) < \infty$ if and only if
\[
\operatorname{supp}(\rho) \subseteq \operatorname{supp}(\sigma).
\]
\end{remark}

For any density operator $\rho$, we have
$0 < \operatorname{Tr}(\rho^{\alpha}) < \infty$ for all $\alpha > 0$.
Consequently, the finiteness of $S_{\alpha}(\rho \| \sigma)$ hinges on the
positivity and finiteness of the term
$\operatorname{Tr}(\rho \sigma^{\alpha-1})$.
Using the representation derived in \eqref{1eq:traceeq}, we obtain the following
characterization, which underlies Remark~\ref{Remark:finiteness}.

\begin{lemma}
\label{Lemma:supportcharacterization}
Let $\rho$ and $\sigma$ be density operators as in
Lemma~\ref{Lemma:traceequation}, with respective spectral decompositions
\[
\rho = \sum_{i=1}^{n} p_i |x_i\rangle \langle x_i|,
\qquad
\sigma = \sum_{j=1}^{n} q_j |y_j\rangle \langle y_j|.
\]
Then $\operatorname{supp}(\rho) \subseteq \operatorname{supp}(\sigma)$ if and
only if for every $y_j \in \operatorname{Ker}(\sigma)$ and every $x_i \in \operatorname{supp}(\rho)$, $\langle x_i | y_j \rangle = 0$. And equivalently, $\operatorname{Ker}(\sigma) \perp \operatorname{supp}(\rho)$.
\end{lemma}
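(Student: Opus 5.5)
The plan is to reduce everything to the orthogonal decompositions of $\mathcal{H}$ induced by the two spectral decompositions. Since $\rho$ and $\sigma$ are Hermitian, their eigenvectors form orthonormal bases. Hence
\[
\operatorname{supp}(\rho)=\operatorname{span}\{x_i : p_i>0\}, \qquad
\operatorname{Ker}(\rho)=\operatorname{span}\{x_i : p_i=0\}=\operatorname{supp}(\rho)^{\perp},
\]
and likewise for $\sigma$, with $\operatorname{Ker}(\sigma)=\operatorname{span}\{y_j : q_j=0\}$. I will first record these identities, which follow from the spectral theorem already used in Lemma~\ref{Lemma:traceequation}. I will interpret ``$y_j\in\operatorname{Ker}(\sigma)$'' as $q_j=0$ and ``$x_i\in\operatorname{supp}(\rho)$'' as $p_i>0$.

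For the first equivalence, I argue each direction separately.

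\emph{Forward direction.} Assume $\operatorname{supp}(\rho)\subseteq\operatorname{supp}(\sigma)$. Then any $x_i$ with $p_i>0$ lies in $\operatorname{supp}(\sigma)$. Any $y_j$ with $q_j=0$ lies in $\operatorname{supp}(\sigma)^{\perp}$, by orthonormality of $\{y_j\}$. Therefore $\langle x_i|y_j\rangle=0$.

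\emph{Converse.} Assume $\langle x_i|y_j\rangle=0$ whenever $p_i>0$ and $q_j=0$. Fix such an $x_i$ and expand it in the orthonormal basis $\{y_j\}$:
\[
x_i=\sum_j \langle y_j|x_i\rangle\, y_j .
\]
By hypothesis the coefficients with $q_j=0$ vanish, so $x_i\in\operatorname{span}\{y_j:q_j>0\}=\operatorname{supp}(\sigma)$. Since $\operatorname{supp}(\rho)$ is spanned by such $x_i$ and $\operatorname{supp}(\sigma)$ is a subspace, the inclusion follows.

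For the final equivalence with $\operatorname{Ker}(\sigma)\perp\operatorname{supp}(\rho)$, I use sesquilinearity. Orthogonality of two subspaces is equivalent to orthogonality of their spanning sets, and $\{y_j:q_j=0\}$ and $\{x_i:p_i>0\}$ span $\operatorname{Ker}(\sigma)$ and $\operatorname{supp}(\rho)$ respectively. Alternatively, $\operatorname{Ker}(\sigma)\perp\operatorname{supp}(\rho)$ means $\operatorname{supp}(\rho)\subseteq\operatorname{Ker}(\sigma)^{\perp}=\operatorname{supp}(\sigma)$, which uses only the identity $\operatorname{Ker}(\sigma)=\operatorname{supp}(\sigma)^{\perp}$ for Hermitian $\sigma$.

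The only delicate point I expect is degenerate eigenvalues, in particular the eigenvalue $0$, where the eigenbasis is not unique. The statement refers to a fixed chosen decomposition, so this is harmless. Still, I will remark that the conditions are basis-independent, because they are equivalent to the subspace statement, which does not depend on the choice of eigenbasis. With this noted, the argument is elementary linear algebra and presents no real obstacle.
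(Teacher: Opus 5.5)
Your proof is correct and follows essentially the paper's route: both arguments rest on $\operatorname{Ker}(\cdot)=\operatorname{supp}(\cdot)^{\perp}$ for Hermitian operators. For the converse, the paper passes from $\operatorname{Ker}(\sigma)\subseteq\operatorname{supp}(\rho)^{\perp}=\operatorname{Ker}(\rho)$ to orthogonal complements, which is exactly your ``alternative'' argument, instead of expanding each $x_i$ in the basis $\{y_j\}$. Your identification of ``$y_j\in\operatorname{Ker}(\sigma)$'' with $q_j=0$, and your spanning-set justification of the final equivalence, make explicit a step the paper leaves implicit.
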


\begin{IEEEproof}
   First we assume that $\operatorname{supp}(\rho) \subseteq \operatorname{supp}(\sigma)$. And this is equivalent to $\operatorname{Ker}(\sigma) \subseteq \operatorname{ker}(\rho)$. 

   Then for any $|y_j \rangle \in \operatorname{Ker}(\sigma)$ and every $|x_i \rangle \in \operatorname{supp}(\rho)$,
   \begin{equation*}
       |x_i\rangle \in \operatorname{Ker}(\rho)^{\perp} \implies \langle x_i | y_j \rangle = 0.
   \end{equation*}
   Conversely, we assume that for all $|y_j\rangle \in \operatorname{Ker}(\sigma)$ and every $|x_i\rangle \in \operatorname{supp}(\rho)$, $\langle x_i | y_j \rangle = 0$.

   Then $\operatorname{Ker}(\sigma) \subseteq \operatorname{supp}(\rho)^{\perp} = \operatorname{Ker}(\rho)$. Considering the orthogonal complements of both the sets, we have $\operatorname{supp}(\rho) \subseteq \operatorname{supp}(\sigma)$.
\end{IEEEproof}

\subsection{Properties of Quantum Relative-$\alpha$-Entropy}
In this subsection, we study several fundamental mathematical properties of the quantum relative $\alpha$-entropy and explore its connections with quantities that are central to quantum information theory. We begin by examining the non-negativity of $S_{\alpha}(\rho \| \sigma)$.
For comparison, recall that the non-negativity of Umegaki’s relative entropy $U(\rho \| \sigma)$
is a direct consequence of Klein’s inequality.

\begin{lemma}
\label{Lemma:positivity}
The quantum relative $\alpha$-entropy is non-negative, that is, for any two quantum states $\rho$ and $\sigma$,
\[
S_{\alpha}(\rho \| \sigma) \geq 0.
\]
The equality holds in the above if and only if $\rho = \sigma$.
\end{lemma}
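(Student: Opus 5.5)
The plan is to work with the normalized form \eqref{1:quantumjones1} and reduce the claim to a (reverse) Hölder inequality. Write $\tilde\rho=\rho/\|\rho\|_\alpha$ and $\tilde\sigma=\sigma/\|\sigma\|_\alpha$. Then
\[
S_\alpha(\rho\|\sigma)=\frac{\alpha}{1-\alpha}\log \operatorname{Tr}(\tilde\rho\,\tilde\sigma^{\alpha-1}),
\]
so non-negativity is equivalent to two statements:
\[
\operatorname{Tr}(\rho\sigma^{\alpha-1})\le \|\rho\|_\alpha\|\sigma\|_\alpha^{\alpha-1}\ \text{ for }\alpha>1,
\qquad
\operatorname{Tr}(\rho\sigma^{\alpha-1})\ge \|\rho\|_\alpha\|\sigma\|_\alpha^{\alpha-1}\ \text{ for }0<\alpha<1 .
\]
In the first case the prefactor $\alpha/(1-\alpha)$ is negative, and in the second it is positive. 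If $\operatorname{supp}(\rho)\not\subseteq\operatorname{supp}(\sigma)$, then $S_\alpha=+\infty$ by definition and there is nothing to prove. So I assume the support condition, and restrict all sums below to indices with $q_j>0$, which is legitimate by Lemma~\ref{Lemma:supportcharacterization}.

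To avoid needing operator Hölder inequalities, I will classicalize the problem using Lemma~\ref{Lemma:traceequation}. Set $w_{ij}=|\langle x_i|y_j\rangle|^2$. This matrix is doubly stochastic, since $\sum_j w_{ij}=\sum_i w_{ij}=1$. Regard $w$ as a measure on the index pairs $(i,j)$. Then
\[
\operatorname{Tr}(\rho\sigma^{\alpha-1})=\sum_{i,j}w_{ij}\,p_i\,q_j^{\alpha-1},\qquad
\sum_{i,j}w_{ij}p_i^\alpha=\operatorname{Tr}\rho^\alpha,\qquad
\sum_{i,j}w_{ij}q_j^\alpha=\operatorname{Tr}\sigma^\alpha .
\]
For $\alpha>1$, apply the classical Hölder inequality with exponents $\alpha$ and $\alpha/(\alpha-1)$ to the functions $p_i$ and $q_j^{\alpha-1}$ over the measure $w$. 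This gives exactly the first required bound. For $0<\alpha<1$, apply the reverse Hölder inequality with exponents $\alpha\in(0,1)$ and $\alpha/(\alpha-1)<0$. It can be derived from ordinary Hölder applied to $p_i^\alpha=(p_iq_j^{\alpha-1})^\alpha\,q_j^{\alpha(1-\alpha)}$ with exponents $1/\alpha$ and $1/(1-\alpha)$, and it yields the second bound. Pairs with $w_{ij}p_i=0$ contribute nothing and cause no trouble.

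For the equality case, note first that if $\rho=\sigma$, the argument of the logarithm in \eqref{1:quantumjones1} is $\operatorname{Tr}\tilde\rho^{\alpha}=1$. Conversely, suppose $S_\alpha(\rho\|\sigma)=0$. Then equality holds in (reverse) Hölder, which forces $p_i^\alpha$ to be proportional to $q_j^\alpha$ on every pair with $w_{ij}>0$. That is, there is a constant $c$ with $p_i=c\,q_j$ whenever $\langle x_i|y_j\rangle\neq0$. Then
\[
\rho|y_j\rangle=\sum_i p_i\langle x_i|y_j\rangle|x_i\rangle=c\,q_j|y_j\rangle
\]
for every $j$, so $\rho=c\sigma$, and taking traces gives $c=1$.

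I expect the main obstacle to be this equality analysis, specifically the careful treatment of zero eigenvalues. For the reverse Hölder direction I must check that the equality condition is still proportionality of $p_i$ and $q_j$ on the support of $w$. I must also check that indices with $p_i=0$ but $w_{ij}>0$ (where $q_j>0$) are handled consistently. I would handle both by verifying the strict-convexity equality condition directly on the support of the measure $w_{ij}p_i$.
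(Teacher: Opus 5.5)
Your argument is correct. For noncommuting states it takes a genuinely different route from the paper.

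\textbf{The paper's route.} It proves the commuting case by trace H\"older, via the same factorization $\operatorname{Tr}\rho^{\alpha}=\operatorname{Tr}[(\rho\sigma^{\alpha-1})^{\alpha}(\sigma^{\alpha})^{1-\alpha}]$ that you use for $0<\alpha<1$. It then tries to reach general states by arguing that mixing with the doubly stochastic matrix $M_{ij}=|\langle x_i|y_j\rangle|^2$ moves $\sum_{i,j}p_iq_j^{\alpha-1}M_{ij}$ in the favourable direction, using convexity or concavity of $t\mapsto t^{\alpha-1}$. It never addresses the equality case.

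\textbf{Your route.} You use $M$ itself as a base measure on index pairs, whose marginals reproduce $\operatorname{Tr}\rho^{\alpha}$ and $\operatorname{Tr}\sigma^{\alpha}$. A single classical (reverse) H\"older inequality for the densities $p_i$, $q_j$ then handles arbitrary states at once. This is in effect the Nussbaum--Szko\l{}a picture of Section~\ref{5sec:NZDistributions}, used directly.

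\textbf{What this buys.} The paper's reduction does not work as written. The function $t^{\alpha-1}$ is convex for $0<\alpha<1$ and for $\alpha>2$, and concave for $1<\alpha<2$. The comparison sum $\sum_{i,j}p_iq_j^{\alpha-1}=\sum_j q_j^{\alpha-1}$ is never tied to the commuting-case bound. A sound version of that idea would use linearity in $M$ and Birkhoff's theorem to reduce to permutation matrices, i.e.\ to commuting configurations. In addition, \eqref{keyineq} is written with $\geq$ for all $\alpha$, whereas for $\alpha>1$ the needed and true bound is the $\leq$ direction you state. Your route also delivers the equality characterization, which the paper asserts but does not prove.

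\textbf{One caution on the equality step.} What you need is the standard equality case of H\"older for the measure $w$ restricted to $\{(i,j): q_j>0\}$.
\begin{enumerate}
\item On that set, equality gives $p_i=c\,q_j$ at \emph{every} pair with $w_{ij}>0$ and $q_j>0$. In particular it rules out pairs with $p_i=0<q_j$ and $w_{ij}>0$.
\item Pairs with $w_{ij}>0$ and $q_j=0$ have $p_i=0$ by Lemma~\ref{Lemma:supportcharacterization}.
\end{enumerate}
Together these give proportionality on the whole support of $w$, which is exactly what your computation of $\rho|y_j\rangle$ uses.

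Do not, as your last sentence suggests, check the equality condition only on the support of $w_{ij}p_i$. That throws away precisely the pairs with $p_i=0<q_j$. What remains only shows that $\sigma=c^{-1}\rho\oplus\sigma'$ with $\sigma'$ supported on $\ker\rho$. Excluding $\sigma'\neq 0$ would then require the separate computation $S_\alpha(\rho\|\sigma)=\log\bigl(1+c^{\alpha}\operatorname{Tr}(\sigma')^{\alpha}/\operatorname{Tr}\rho^{\alpha}\bigr)$.
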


\begin{IEEEproof}
To prove the non-negativity of $S_\alpha(\rho \| \sigma)$, it suffices to show
\begin{equation}
\operatorname{Tr}(\rho \sigma^{\alpha-1})
\ge
\{\operatorname{Tr}(\rho^\alpha)\}^{1/\alpha}
\{\operatorname{Tr}(\sigma^\alpha)\}^{(\alpha-1)/\alpha},
\label{keyineq}
\end{equation}
for all $\alpha>0$, $\alpha\neq1$.

\medskip
\noindent
\textit{Step 1: Commuting case.}
Assume first that $\rho$ and $\sigma$ commute. Then they admit a joint spectral decomposition and
\[
\operatorname{Tr}(\rho^\alpha)
=
\operatorname{Tr}\!\left[(\rho \sigma^{\alpha-1})^\alpha (\sigma^\alpha)^{1-\alpha}\right].
\]
Applying Hölder's inequality for traces yields
\[
\operatorname{Tr}(\rho^\alpha)
\le
\big[\operatorname{Tr}(\rho \sigma^{\alpha-1})\big]^\alpha
\big[\operatorname{Tr}(\sigma^\alpha)\big]^{1-\alpha}
\quad \text{for } 0<\alpha<1,
\]
with the reverse inequality for $\alpha>1$. Rearranging gives
\eqref{keyineq} for all $\alpha>0$, $\alpha\neq1$.

\medskip
\noindent
\textit{Step 2: General case.}
Let
\[
\rho=\sum_i p_i |x_i\rangle\langle x_i|,
\qquad
\sigma=\sum_j q_j |y_j\rangle\langle y_j|.
\]
Then
\[
\operatorname{Tr}(\rho \sigma^{\alpha-1})
=
\sum_{i,j} p_i q_j^{\alpha-1} |\langle x_i|y_j\rangle|^2.
\]
Define $M_{ij}:=|\langle x_i|y_j\rangle|^2$. Here $M$ is doubly stochastic, that is,
$\sum_j M_{ij}=1$ and $\sum_i M_{ij}=1$.

Since $t\mapsto t^{\alpha-1}$ is convex for $\alpha>1$ and concave for $0<\alpha<1$, mixing by a doubly stochastic matrix yields
\[
\sum_{i,j} p_i q_j^{\alpha-1} M_{ij}
\begin{cases}
\le \sum_{i,j} p_i q_j^{\alpha-1}, & \alpha>1,\\[4pt]
\ge \sum_{i,j} p_i q_j^{\alpha-1}, & 0<\alpha<1.
\end{cases}
\]
Multiplying by $\frac{\alpha}{1-\alpha}$ reverses the inequality in the first case and preserves it in the second, so that in both regimes
\[
\frac{\alpha}{1-\alpha}
\log \sum_{i,j} p_i q_j^{\alpha-1} M_{ij}
\ge
\frac{\alpha}{1-\alpha}
\log \sum_{i,j} p_i q_j^{\alpha-1}.
\]
Combining this with the commuting-case bound establishes \eqref{keyineq} for arbitrary density matrices. 
\end{IEEEproof}

\begin{lemma}
\label{Lemma:Properties}
The quantum relative $\alpha$-entropy satisfies the following properties.
\begin{enumerate}
\item \emph{Additivity under tensor products:}
For density operators $\rho, \sigma, \tau, \omega$ satisfying
$\operatorname{supp}(\rho) \subseteq \operatorname{supp}(\sigma)$ and
$\operatorname{supp}(\tau) \subseteq \operatorname{supp}(\omega)$,
\[
S_{\alpha}(\rho \otimes \tau \| \sigma \otimes \omega)
= S_{\alpha}(\rho \| \sigma) + S_{\alpha}(\tau \| \omega).
\]

\item \emph{Unitary invariance:}
For any unitary operator $U$ on $\mathcal{H}$,
\[
S_{\alpha}(U \rho U^{*} \| U \sigma U^{*})
= S_{\alpha}(\rho \| \sigma).
\]
\end{enumerate}
\end{lemma}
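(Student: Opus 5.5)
Both properties will follow by working term by term in the definition \eqref{quantumjonesetal}. Each of the three traces $\operatorname{Tr}(\rho\sigma^{\alpha-1})$, $\operatorname{Tr}(\rho^\alpha)$, $\operatorname{Tr}(\sigma^\alpha)$ is either multiplicative under $\otimes$ or invariant under unitary conjugation. Since $S_\alpha$ is a linear combination of logarithms of these traces, the claims then follow directly. The plan is to first settle the support conditions, so that we are never adding $+\infty$ to finite quantities, and then to compute.

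For additivity, I will first check that $\operatorname{supp}(\rho\otimes\tau)\subseteq\operatorname{supp}(\sigma\otimes\omega)$. This holds because $\operatorname{supp}(A\otimes B)=\operatorname{supp}(A)\otimes\operatorname{supp}(B)$ for positive $A,B$, which can be read off from the product spectral decomposition. Next I use functional calculus on tensor products. If $\sigma=\sum_j q_j|y_j\rangle\langle y_j|$ and $\omega=\sum_k r_k|z_k\rangle\langle z_k|$, then $\sigma\otimes\omega=\sum_{j,k} q_jr_k|y_j\otimes z_k\rangle\langle y_j\otimes z_k|$. Hence for any real power $\beta$, taken on the support when $\beta<0$ as the definition implicitly requires, we get $(\sigma\otimes\omega)^\beta=\sigma^\beta\otimes\omega^\beta$.

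Combined with $(A\otimes B)(C\otimes D)=AC\otimes BD$ and $\operatorname{Tr}(A\otimes B)=\operatorname{Tr}A\,\operatorname{Tr}B$, this gives
\[
\operatorname{Tr}\bigl((\rho\otimes\tau)(\sigma\otimes\omega)^{\alpha-1}\bigr)=\operatorname{Tr}(\rho\sigma^{\alpha-1})\operatorname{Tr}(\tau\omega^{\alpha-1}).
\]
Similarly, $\operatorname{Tr}((\rho\otimes\tau)^\alpha)=\operatorname{Tr}\rho^\alpha\operatorname{Tr}\tau^\alpha$, and likewise for $\sigma\otimes\omega$. Taking logarithms splits each term into a sum, and regrouping gives $S_\alpha(\rho\|\sigma)+S_\alpha(\tau\|\omega)$. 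All quantities are finite and positive here: under the support hypotheses, $\operatorname{Tr}(\rho\sigma^{\alpha-1})>0$ by Lemma~\ref{Lemma:traceequation} and Lemma~\ref{Lemma:supportcharacterization}, since some $p_i>0$ pairs with some $q_j>0$ with $\langle x_i|y_j\rangle\neq0$. So the logarithms are legitimate.

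For unitary invariance, the key fact is $(U\sigma U^*)^\beta=U\sigma^\beta U^*$. This holds because $U\sigma U^*=\sum_j q_j|Uy_j\rangle\langle Uy_j|$ is a spectral decomposition with the same eigenvalues. Then
\[
\operatorname{Tr}(U\rho U^*U\sigma^{\alpha-1}U^*)=\operatorname{Tr}(\rho\sigma^{\alpha-1})
\]
by cyclicity, and $\operatorname{Tr}((U\rho U^*)^\alpha)=\operatorname{Tr}\rho^\alpha$. Also $\operatorname{supp}(U\rho U^*)=U\operatorname{supp}(\rho)$, so the support inclusion is preserved in both directions, and the $+\infty$ case is preserved as well. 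Alternatively, the formula \eqref{quantumjones2} is visibly invariant, since the $p_i$, $q_j$ and the overlaps $|\langle Ux_i|Uy_j\rangle|^2=|\langle x_i|y_j\rangle|^2$ are unchanged.

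The only delicate point I expect is the treatment of negative powers $\sigma^{\alpha-1}$ for $\alpha<1$ when $\sigma$ is not full rank. I will handle it by interpreting the power on the support, i.e. as a generalized inverse power. I will then note that the identity $(\sigma\otimes\omega)^{\beta}=\sigma^\beta\otimes\omega^\beta$ still holds with this convention, because the support of the tensor product is the tensor product of the supports. Everything else is routine.
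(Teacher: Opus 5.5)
Your proposal is correct and follows essentially the same route as the paper: for additivity, a term-by-term computation using $(A\otimes B)^m=A^m\otimes B^m$ (restricted to supports for negative $m$), the mixed-product rule, and $\operatorname{Tr}(A\otimes B)=\operatorname{Tr}A\,\operatorname{Tr}B$; for unitary invariance, the identity $(U\rho U^{*})^m=U\rho^mU^{*}$ together with invariance of the trace. Your extra checks fill in details the paper leaves implicit: that $\operatorname{supp}(\rho\otimes\tau)\subseteq\operatorname{supp}(\sigma\otimes\omega)$, that $\operatorname{Tr}(\rho\sigma^{\alpha-1})$ is strictly positive and finite so the logarithms are legitimate, and that the $+\infty$ case is preserved under conjugation.
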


\begin{IEEEproof}
Following the definition in \eqref{quantumjonesetal}, we have 
\begin{align*}
    & S_{\alpha}(\rho \otimes \tau|| \sigma \otimes \omega) \nonumber \\
    &= \frac{\alpha}{1 - \alpha}\log [\operatorname{Tr} \{(\rho \otimes \tau) (\sigma \otimes \omega)^{\alpha - 1}\}] - \frac{1}{1 - \alpha} \log [\operatorname{Tr} \{(\rho \otimes \tau) ^\alpha\}] + \log [\operatorname{Tr} \{(\sigma \otimes \omega)^\alpha\}] \nonumber \\
    & \stackrel{(a)}{=} \frac{\alpha}{1 - \alpha}\log [\operatorname{Tr} \{(\rho \otimes \tau) (\sigma^{\alpha - 1} \otimes \omega^{\alpha - 1})\}] - \frac{1}{1 - \alpha} \log [\operatorname{Tr} (\rho^\alpha \otimes \tau^\alpha)] + \log [\operatorname{Tr} (\sigma^\alpha \otimes \omega^\alpha)]  \nonumber \\ 
    & \stackrel{(b)}{=} \frac{\alpha}{1 - \alpha}\log [\operatorname{Tr} \{(\rho \sigma^{\alpha - 1}) \otimes (\tau \omega^{\alpha - 1})\}] - \frac{1}{1 - \alpha} \log [\operatorname{Tr} (\rho^\alpha \otimes \tau^\alpha)] +  \log [\operatorname{Tr} (\sigma^\alpha \otimes \omega^\alpha)] \nonumber \\ 
    & \stackrel{(c)}{=} \frac{\alpha}{1 - \alpha}\log [ \operatorname{Tr} (\rho \sigma^{\alpha - 1}) \operatorname{Tr} (\tau \omega^{\alpha - 1})] - \frac{1}{1 - \alpha} \log [\operatorname{Tr} (\rho^\alpha) \operatorname{Tr} (\tau^\alpha)] + \log [\operatorname{Tr} (\sigma^\alpha) \operatorname{Tr} (\omega^\alpha)] \nonumber \\ 
    & = \frac{\alpha}{1 - \alpha}\log [ \operatorname{Tr} (\rho \sigma^{\alpha - 1}) ] - \frac{1}{1 - \alpha} \log [\operatorname{Tr} (\rho^\alpha) ] + \log [\operatorname{Tr} (\sigma^\alpha) ]  + \frac{\alpha}{1 - \alpha}\log [\operatorname{Tr} (\tau \omega^{\alpha - 1})] \nonumber \\
    &~~~ - \frac{1}{1 - \alpha} \log [\operatorname{Tr} (\tau^\alpha)] + \log [\operatorname{Tr} (\omega^\alpha)] \nonumber \\
    & = S_{\alpha}(\rho || \sigma) + S_{\alpha}(\tau || \omega). \nonumber
\end{align*}
The equality in $(a)$ follows from the identity
$(A \otimes B)^{m} = A^{m} \otimes B^{m}$,
which holds for any two complex positive semi-definite matrices $A$ and $B$ and
any real $m$. For negative values of $m$, this identity is well-defined when
restricted to the support of the operators. The step $(b)$ is justified by the property
$(A \otimes B)(C \otimes D) = (AC) \otimes (BD)$,
whenever the products $AC$ and $BD$ are well-defined. Finally, the equality in
$(c)$ follows directly from the trace factorization rule
$\operatorname{Tr}(A \otimes B) = \operatorname{Tr}(A)\operatorname{Tr}(B)$.
This completes the proof of the first statement of the lemma. 

The second statement follows analogously from the representation in \eqref{quantumjonesetal} along with the properties that $(U \rho U^{*})^m = U \rho^m U^{*}$, and that $\operatorname{Tr}(U \rho U^{*}) = \operatorname{Tr}(\rho)$ for any unitary operator $U$ on $\mathcal{H}$.
\end{IEEEproof}

It should be noted that, while all the divergences from the quantum $f$-divergence class \eqref{standard_f} are unitarily invariant, they are not always additive under tensor products. Some examples include the quantum $\chi^2$-divergences \cite{Temme10}, quantum Hellinger divergences \cite{Osaka25}. 

\begin{remark}
    The quantum relative $\alpha$-entropy, $S_{\alpha}(\rho || \sigma)$ is not generally monotonic in $\alpha,$ as demonstrated in Figure \ref{fig1}. Table \ref{Tab:Sa_behavior} exhibits the different behavior of $S_\alpha(\rho||\sigma)$ over sets of increasing values of $\alpha$. 
\end{remark}

\begin{table}[ht]
\centering
\renewcommand{\arraystretch}{1.5} 
\setlength{\tabcolsep}{10pt}       
\begin{tabular}{|c|c|c|c|}
\hline
Density Matrices & $\alpha$-values & $S_\alpha$-values  & Behavior of $S_\alpha$ \\
\hline
\multirow{3}{*}{
$
\begin{aligned}
\rho &= 
\begin{pmatrix}
0 & 0 \\ 0 & 1
\end{pmatrix}
\qquad
\sigma =
\begin{pmatrix}
3/4 & 0 \\ 0 & 1/4
\end{pmatrix}
\end{aligned}
$
}
& $0.7$ & $1.660$  & \multirow{3}{*}{Increasing} \\ \cline{2-3}
& $0.9$ & $1.886$ &  \\ \cline{2-3}
& $1.2$ & $2.243$ &  \\
\hline
\multirow{3}{*}{
$
\begin{aligned}
\rho &= 
\begin{pmatrix}
1 & 0 \\ 0 & 0
\end{pmatrix}
\qquad
\sigma =
\begin{pmatrix}
3/4 & 0 \\ 0 & 1/4
\end{pmatrix}
\end{aligned}
$
}
& $0.5$ & $0.6572$ & \multirow{3}{*}{Decreasing} \\ \cline{2-3}
& $0.7$ & $0.5495$ &  \\ \cline{2-3}
& $0.9$ & $0.4531$ &  \\
\hline

\multirow{3}{*}{
$
\begin{aligned}
\rho = \begin{pmatrix}
    0.8 & 0.2 \\ 0.2 & 0.2
\end{pmatrix}
\qquad
\sigma = \begin{pmatrix}
    0.6 & 0 \\ 0 & 0.4
\end{pmatrix}
\end{aligned}
$
}
& $1.5$ & $0.3311$ & \multirow{3}{*}{Oscillating} \\ \cline{2-3}
& $2$ & $0.3334$ &  \\ \cline{2-3}
& $3$ & $0.3076$ &  \\
\hline
\end{tabular}
\vspace{0.3 cm}
\caption{Behavior of $S_\alpha(\rho\|\sigma)$ for different density matrix pairs}
\label{Tab:Sa_behavior}
\end{table}

\begin{figure}
\centering
\includegraphics[width=0.9\textwidth]{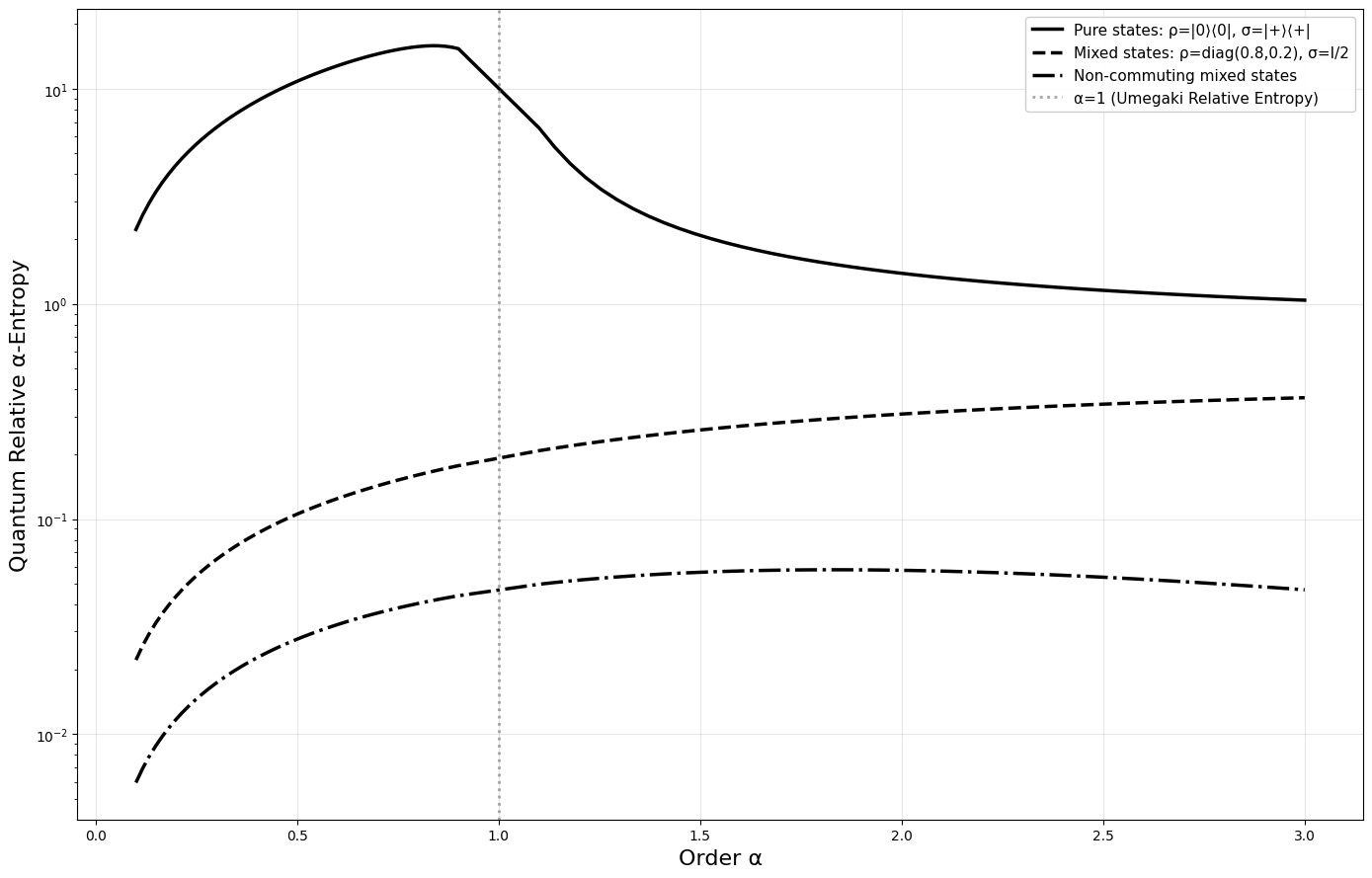}
\caption{The Quantum Relative $\alpha$-Entropy as a function of its order for three different sets of quantum states.}
\label{fig1}
\end{figure}

\begin{lemma}
\label{Lemma:Constants}
    For any two positive constants $k_1$ and $k_2,$ $S_{\alpha}(k_1 \rho || k_2 \sigma) = S_{\alpha}(\rho || \sigma).$
\end{lemma}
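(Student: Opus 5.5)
The plan is to substitute $k_1\rho$ and $k_2\sigma$ directly into the defining expression \eqref{quantumjonesetal} and check that the scaling factors cancel. I would work with the positive semidefinite operators $k_1\rho$ and $k_2\sigma$, which are no longer unit-trace, and simply evaluate the same formula. The scaling facts needed are $(k A)^{m} = k^{m} A^{m}$ for $k>0$ and real $m$ (via the spectral decomposition, with negative powers taken on the support), together with linearity of the trace. Since $\operatorname{supp}(k_1\rho)=\operatorname{supp}(\rho)$ and $\operatorname{supp}(k_2\sigma)=\operatorname{supp}(\sigma)$, the support condition, and hence the case $S_\alpha=+\infty$, is unaffected.

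Next I would expand the logarithms, using $\log(ab)=\log a+\log b$ for the positive scalars:
\begin{align*}
\operatorname{Tr}\big(k_1\rho\,(k_2\sigma)^{\alpha-1}\big) &= k_1 k_2^{\alpha-1}\operatorname{Tr}(\rho\sigma^{\alpha-1}),\\
\operatorname{Tr}\big((k_1\rho)^{\alpha}\big) &= k_1^{\alpha}\operatorname{Tr}(\rho^{\alpha}),\\
\operatorname{Tr}\big((k_2\sigma)^{\alpha}\big) &= k_2^{\alpha}\operatorname{Tr}(\sigma^{\alpha}).
\end{align*}
The extra contributions from $k_1$ total $\frac{\alpha}{1-\alpha}\log k_1-\frac{\alpha}{1-\alpha}\log k_1=0$. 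Those from $k_2$ total $\frac{\alpha(\alpha-1)}{1-\alpha}\log k_2+\alpha\log k_2=-\alpha\log k_2+\alpha\log k_2=0$. This gives the claimed identity.

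A slicker alternative is to use the normalized form \eqref{1:quantumjones1}. Because $\|\cdot\|_\alpha$ is positively homogeneous, $k_1\rho/\|k_1\rho\|_\alpha=\rho/\|\rho\|_\alpha$, and likewise for $\sigma$, so the expression is unchanged. There is no real obstacle here. The only care needed is to justify $(k_2\sigma)^{\alpha-1}=k_2^{\alpha-1}\sigma^{\alpha-1}$ when $\alpha<1$ and $\sigma$ is singular, which I would handle by interpreting the power on $\operatorname{supp}(\sigma)$ as in the proof of Lemma~\ref{Lemma:Properties}.
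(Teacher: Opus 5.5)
Your proposal is correct and follows the paper's proof: substitute $k_1\rho$ and $k_2\sigma$ into \eqref{quantumjonesetal}, pull the scalars out of the traces, and check that the $\log k_1$ terms cancel and the $\log k_2$ terms cancel, exactly as you computed. Your alternative via the normalized form \eqref{1:quantumjones1}, using positive homogeneity of $\|\cdot\|_\alpha$, is a valid shortcut for the same cancellation.
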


\begin{IEEEproof}
Using the definition of the quantum relative $\alpha$-entropy \eqref{quantumjonesetal}, we have 
\begin{eqnarray}
 {S_{\alpha}(k_1 \rho || k_2 \sigma)} 
   & = & \frac{\alpha}{1 - \alpha} \log \operatorname{Tr} (k_1 \rho k_2^{\alpha - 1} \sigma^{\alpha - 1}) - \frac{1}{1 - \alpha} \log \operatorname{Tr} (k_1^{\alpha} \rho^\alpha) + \log \operatorname{Tr} (k_2^{\alpha}\sigma^\alpha)  \nonumber \\
   & = & \frac{\alpha}{1 - \alpha} \log k_1 + \frac{\alpha}{1 - \alpha} \log \operatorname{Tr}(\rho \sigma^{\alpha - 1}) + \frac{\alpha}{1 - \alpha} \log k_2^{\alpha - 1} - \frac{1}{1 - \alpha} \log \operatorname{Tr}(\rho^\alpha) \nonumber \\
   & - & \frac{1}{1-\alpha} \log k_1^\alpha + \log k_2^\alpha + \log \operatorname{Tr}(\sigma^\alpha) \nonumber \\
   & = & S_{\alpha}(\rho || \sigma) + \frac{\alpha}{1 - \alpha} \log k_1 - \frac{\alpha}{1 - \alpha} \log k_1 + \alpha \log k_2 + \frac{\alpha(\alpha-1)}{1-\alpha} \log k_2 \nonumber \\
   & = & S_{\alpha}(\rho || \sigma). \nonumber
\end{eqnarray}
\end{IEEEproof}

\begin{remark}
    \begin{itemize}
        \item[1.] The lemma above implies that the quantum relative $\alpha$-entropy depends only on the relative geometry, overlapping of the density matrices $\rho$ and $\sigma$, not on their overall magnitudes. 
        \item[2.] This property does not hold for the members of $f$-divergence class. For example, the Petz-R\'enyi-$\alpha$-relative entropy $\Hat{D}_{\alpha}(\rho||\sigma)$ is affine under scaling, but not invariant.
    \end{itemize}
\end{remark}


\subsection{A Nonlinear Convexity Framework for Quantum Divergences} 

A real-valued function $f : D \to \mathbb{R}$, where 
$D \subseteq \mathbb{R}$,
is said to be \emph{convex} if for all $x,y \in D$ and $t \in
[0,1]$,
\[
f(t x + (1-t)y) \leq t f(x) + (1-t) f(y).
\]
The set $D \subseteq \mathbb{R}$ itself is called \emph{convex} if
$t x + (1-t)y \in D$ for all $x,y \in D$ and $t \in [0,1]$.
Analogously, a function $f(x,y)$ is said to be \emph{jointly convex} on
$D \times D$ if, for all $x_1,x_2,y_1,y_2 \in D$ and $t \in [0,1]$,
\[
f\bigl(t x_1 + (1-t)x_2, t y_1 + (1-t)y_2\bigr)
\leq
t f(x_1,y_1) + (1-t)f(x_2,y_2).
\]

It is easy to check that the set of all density operators forms a convex set.
Moreover, the joint convexity of several quantum divergences, such as
Umegaki’s relative entropy \eqref{Umegaki}, the Petz-R\'enyi
$\alpha$-relative entropy \eqref{RenyiRelativeentropy}, and the sandwiched
R\'enyi divergence \eqref{Sandwiched}, has been extensively studied in the
literature; see, for example, \cite{Uhlmann77,MullerMartin2013}.

In contrast, the quantum relative $\alpha$-entropy
$S_{\alpha}(\rho \| \sigma)$ defined in \eqref{quantumjonesetal}
is neither convex in $\rho$ nor in $\sigma$ for general values of
$\alpha > 0$, $\alpha \neq 1$.
In the special case where $\sigma$ is fixed,
$S_{\alpha}(\rho \| \sigma)$ is convex in $\rho$ for $\alpha \in (0,1)$.
However, $S_{\alpha}(\rho \| \sigma)$ fails to be jointly convex, primarily due to its multiplicative, rather than linear dependence on the arguments $\rho$ and $\sigma$.
The additive mixing required for standard joint convexity disrupts the algebraic structure underlying $S_{\alpha}(\rho \| \sigma)$.

Motivated by this observation, we introduce a modified notion of convexity that is compatible with the multiplicative structure of the divergence $S_\alpha$. Specifically, we replace linear convex combinations by normalized products
of density operators raised to non-linear powers as described in the following definition.

\begin{definition}
\label{Defn:ConvexSet}
A set $\mathcal{A}$ of density operators is said to be \emph{generalized
convex} if, for any $\rho,\sigma \in \mathcal{A}$ and any $t \in [0,1]$, the
operator
\[
M_{\rho,\sigma}^{t}
:= \frac{\rho^{t}\sigma^{1-t}}
{\operatorname{Tr}(\rho^{t}\sigma^{1-t})}
\]
also belongs to $\mathcal{A}$.
\end{definition}

\begin{remark}
\begin{itemize}
    \item [1.] The generalized convex combination defined
above, based on probability densities, is well known in the 
context of non-extensive statistical physics. See, for
example, \cite{Naudts04, KumarS15J2}.
\item[2.]  For any two arbitrary density operators $\rho$ and $\sigma$, the matrix
$M_{\rho,\sigma}^{t}$ is not necessarily a valid density operator. Although $\rho^{t}$ and $\sigma^{1-t}$ are positive semidefinite for $\rho,\sigma \geq 0$, their product need not be Hermitian or positive semidefinite unless $\rho$ and $\sigma$ commute.
Hence, $M_{\rho,\sigma}^{t}$ defines a density operator if and only if
$\rho$ and $\sigma$ commute.
The normalization factor
$\operatorname{Tr}(\rho^{t}\sigma^{1-t})$ ensures that
$\operatorname{Tr}(M_{\rho,\sigma}^{t}) = 1$ whenever the product is
well-defined.
\end{itemize}
\end{remark}

\begin{lemma}
Any generalized convex set $\mathcal{A}$ is a proper subset of the set of all
density operators and consists solely of mutually commuting density operators. Consequently, all elements of $\mathcal{A}$ are simultaneously diagonalizable by a common unitary transformation.
\end{lemma}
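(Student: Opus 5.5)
The plan is to use the requirement in Definition~\ref{Defn:ConvexSet} for a single interior value of $t$, namely $t=1/2$, and turn the condition ``$M_{\rho,\sigma}^{1/2}$ is a density operator'' into a commutation relation. Fix $\rho,\sigma\in\mathcal{A}$. Since $\mathcal{A}$ is generalized convex, $M_{\rho,\sigma}^{1/2}=\rho^{1/2}\sigma^{1/2}/\operatorname{Tr}(\rho^{1/2}\sigma^{1/2})$ must be a density operator. This requires two things.
\begin{itemize}
\item[1.] The normalizing trace must be nonzero, since otherwise $M_{\rho,\sigma}^{1/2}$ is undefined. I read this as forbidding such pairs from lying in $\mathcal{A}$, consistent with the second Remark after Definition~\ref{Defn:ConvexSet}.
\item[2.] $M_{\rho,\sigma}^{1/2}$ must be Hermitian.
\end{itemize}
Dividing by a nonzero scalar, Hermiticity of $M_{\rho,\sigma}^{1/2}$ forces $\rho^{1/2}\sigma^{1/2}$ itself to be Hermitian. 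If the trace were non-real, $M$ would already fail to be Hermitian, so either way the product must be Hermitian.

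The key elementary fact is that for Hermitian $A,B$ one has $(AB)^{*}=BA$. Hence $AB$ is Hermitian if and only if $AB=BA$. Applying this with $A=\rho^{1/2}$ and $B=\sigma^{1/2}$ gives $[\rho^{1/2},\sigma^{1/2}]=0$. I then lift this to $\rho$ and $\sigma$ directly: $\rho=(\rho^{1/2})^2$ and $\sigma=(\sigma^{1/2})^2$, and squares of commuting operators commute. So $\rho\sigma=\sigma\rho$ for every pair in $\mathcal{A}$, i.e.\ $\mathcal{A}$ consists of mutually commuting density operators. Conversely, if $\rho$ and $\sigma$ commute then every $M_{\rho,\sigma}^{t}$ is diagonal in a common eigenbasis with nonnegative entries, so it is a genuine density operator. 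This matches the second Remark and shows the characterization is sharp.

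For simultaneous diagonalizability I will invoke the standard theorem that any family of pairwise commuting normal (here Hermitian) matrices on a finite-dimensional space is simultaneously unitarily diagonalizable. The usual proof is by induction on dimension: decompose $\mathcal{H}$ into the eigenspaces of one element, observe that each eigenspace is invariant under all other elements, and recurse. Finiteness of dimension lets us reduce to a finite spanning subfamily, so infinitely many members cause no trouble.

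Finally, for properness, assume $n=\dim\mathcal{H}\ge 2$; for $n=1$ there is only one state and the statement degenerates. I exhibit two non-commuting states, for instance
\[
\rho=\begin{pmatrix}1&0\\0&0\end{pmatrix}\oplus 0,
\qquad
\sigma=\tfrac12\begin{pmatrix}1&1\\1&1\end{pmatrix}\oplus 0 .
\]
These do not commute, so by the previous step the full state space is not generalized convex, and any generalized convex $\mathcal{A}$ must be a proper subset.

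The main subtlety, rather than any hard step, is the degenerate case $\operatorname{Tr}(\rho^{t}\sigma^{1-t})=0$ (or a non-real trace) together with the choice of $t$. Using $t=1/2$ rather than a general $t$ avoids having to argue that commutation of $\rho^{t}$ with $\sigma^{1-t}$ implies commutation of $\rho$ with $\sigma$. That implication is nonetheless true for any $t\in(0,1)$, since $\rho$ is a polynomial in $\rho^{t}$ and $\sigma$ is a polynomial in $\sigma^{1-t}$.
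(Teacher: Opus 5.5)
Your argument is correct and makes rigorous the reasoning the paper relies on. The paper gives no separate proof of this lemma and leans only on the preceding Remark that $\rho^{t}\sigma^{1-t}$ is not Hermitian unless $\rho$ and $\sigma$ commute; you fill this in via $(AB)^{*}=BA$ at $t=1/2$, the lift from $[\rho^{1/2},\sigma^{1/2}]=0$ to $[\rho,\sigma]=0$, and the restriction $n\ge 2$ for properness, which the statement genuinely needs. The only loose ends are cosmetic: $\operatorname{Tr}(\rho^{1/2}\sigma^{1/2})=\|\rho^{1/4}\sigma^{1/4}\|_2^2\ge 0$, so the non-real case you hedge against never arises, and if this trace vanishes then $\rho\sigma=\sigma\rho=0$, so such pairs commute however the definition treats them (for the same reason, your converse remark needs $\operatorname{Tr}(\rho^{t}\sigma^{1-t})>0$).
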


We now restrict attention to the quantum relative $\alpha$-entropy $S_{\alpha}(\rho \| \sigma)$ defined on a generalized convex set $\mathcal{A}$ and introduce a corresponding generalized notion of joint convexity adapted to this setting.

\begin{lemma}
    Let $\rho, \sigma, \tau, \omega$ be density matrices in $\mathcal{A}$ and $t \in [0,1].$ Then for $\alpha < 1,$
\begin{eqnarray}
\label{Convexity}
    {S_{\alpha}(M_{\rho, \sigma}^t || M_{\tau, \omega}^t)}
    & \leq & t S_{\alpha}(\rho||\tau) + (1-t) S_{\alpha}(\sigma||\omega) + \frac{1}{\alpha-1} \log (Z_{\rho, \sigma}^t) + \log (Z_{\tau, \omega}^t),
\end{eqnarray}
where $Z_{\rho, \sigma}^t$ is a real number number defined as:
\begin{equation*}
    Z_{\rho, \sigma}^t : = \operatorname{Tr} \Bigg[ \left\{\Big(\frac{\rho}{||\rho||_\alpha}\Big)^t \Big(\frac{\sigma}{||\sigma||_\alpha}\Big)^{1-t} \right\}^\alpha\Bigg].
\end{equation*}
The inequality is reversed for $\alpha > 1.$
\end{lemma}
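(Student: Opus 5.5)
Since all elements of $\mathcal{A}$ commute (by the preceding lemma), we may work in a common eigenbasis. Write $\rho,\sigma,\tau,\omega$ as diagonal with eigenvalue vectors $p,s,q,w$. Then every quantity in \eqref{Convexity} reduces to a classical expression, and $M^t_{\rho,\sigma}$, $M^t_{\tau,\omega}$ are diagonal with entries proportional to $p_i^t s_i^{1-t}$ and $q_i^t w_i^{1-t}$. By Lemma~\ref{Lemma:Constants}, $S_\alpha$ is invariant under positive rescaling of either argument. So we may drop the trace normalizations in $M^t$ and even replace $\rho,\sigma,\tau,\omega$ by their normalized versions $\tilde\rho=\rho/\|\rho\|_\alpha$, etc., which have unit $\alpha$-norm. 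This is what makes $Z$ appear: $Z^t_{\rho,\sigma}=\operatorname{Tr}[(\tilde\rho^t\tilde\sigma^{1-t})^\alpha]$.

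Using the form \eqref{1:quantumjones1}, we compute
\[
S_\alpha(\tilde\rho^t\tilde\sigma^{1-t}\|\tilde\tau^t\tilde\omega^{1-t})
=\frac{\alpha}{1-\alpha}\log\operatorname{Tr}\big[\tilde\rho^t\tilde\sigma^{1-t}(\tilde\tau^{t}\tilde\omega^{1-t})^{\alpha-1}\big]-\frac{1}{1-\alpha}\log Z^t_{\rho,\sigma}+\log Z^t_{\tau,\omega}.
\]
The last two terms are exactly the correction terms in \eqref{Convexity}, since $-\tfrac{1}{1-\alpha}=\tfrac{1}{\alpha-1}$. It therefore remains to bound the first term. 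By commutativity, the trace equals
\[
\operatorname{Tr}\big[(\tilde\rho\tilde\tau^{\alpha-1})^t(\tilde\sigma\tilde\omega^{\alpha-1})^{1-t}\big].
\]
Apply Hölder's inequality with exponents $1/t$ and $1/(1-t)$ (classically, on the nonnegative diagonal entries) to get
\[
\operatorname{Tr}\big[(\tilde\rho\tilde\tau^{\alpha-1})^t(\tilde\sigma\tilde\omega^{\alpha-1})^{1-t}\big]\le \big[\operatorname{Tr}(\tilde\rho\tilde\tau^{\alpha-1})\big]^t\big[\operatorname{Tr}(\tilde\sigma\tilde\omega^{\alpha-1})\big]^{1-t}.
\]

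Taking logarithms, the first term is at most $t\log\operatorname{Tr}(\tilde\rho\tilde\tau^{\alpha-1})+(1-t)\log\operatorname{Tr}(\tilde\sigma\tilde\omega^{\alpha-1})$. For $\alpha<1$ the prefactor $\frac{\alpha}{1-\alpha}$ is positive, so the inequality is preserved. By \eqref{1:quantumjones1}, $\frac{\alpha}{1-\alpha}\log\operatorname{Tr}(\tilde\rho\tilde\tau^{\alpha-1})=S_\alpha(\rho\|\tau)$, and similarly for $(\sigma,\omega)$. This yields \eqref{Convexity}. For $\alpha>1$ the prefactor is negative, which flips the Hölder bound into the reversed inequality while the correction terms remain identities.

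The main obstacle is bookkeeping rather than analysis. We must check that the normalization of $M^t$ and the passage to $\tilde\rho$, etc. are legitimately absorbed by Lemma~\ref{Lemma:Constants}. We must also handle the support conditions. If some $\operatorname{supp}$ inclusion fails, the right-hand side is $+\infty$ for $\alpha<1$, and the inequality is trivial. Otherwise, the negative powers $\tilde\tau^{\alpha-1}$, $\tilde\omega^{\alpha-1}$ are taken on supports, where the factorization $(\tilde\tau^t\tilde\omega^{1-t})^{\alpha-1}=\tilde\tau^{t(\alpha-1)}\tilde\omega^{(1-t)(\alpha-1)}$ is valid by commutativity. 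The boundary cases $t\in\{0,1\}$ make Hölder an equality, and they can be checked directly.
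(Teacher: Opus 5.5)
Your proposal is correct and follows the paper's proof. Commutativity on $\mathcal{A}$ lets you rewrite the cross term as $\operatorname{Tr}\big[(\tilde\rho\tilde\tau^{\alpha-1})^t(\tilde\sigma\tilde\omega^{\alpha-1})^{1-t}\big]$, H\"older with exponents $1/t$ and $1/(1-t)$ bounds it, the sign of $\frac{\alpha}{1-\alpha}$ fixes the direction, and the $Z$-terms arise as the $\alpha$-power traces of the products; the only difference is that you absorb the normalizations via Lemma~\ref{Lemma:Constants}, whereas the paper computes $\|M^t_{\rho,\sigma}\|_\alpha$ and pulls the factors out explicitly.
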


\begin{IEEEproof}
Here 
\begin{eqnarray}
    ||M_{\rho, \sigma}^t||_{\alpha} = \Bigg[ \operatorname{Tr} \Big( \frac{\rho^t \sigma^{1-t}}{\operatorname{Tr} (\rho^t \sigma^{1-t})} \Big)^{\alpha} \Bigg]^{1/\alpha} = \frac{[\operatorname{Tr} \{(\rho^t \sigma^{1-t})^{\alpha}\}]^{1/\alpha}} {\operatorname{Tr} (\rho^t \sigma^{1-t})}. \nonumber
\end{eqnarray}

and
\begin{equation*}
    \frac{M_{\rho, \sigma}^t}{||M_{\rho, \sigma}^t||_{\alpha}} : = \frac{\rho^t \sigma^{1-t}}{[\operatorname{Tr} (\rho^t \sigma^{1-t})]^{1/\alpha}}.
\end{equation*}

Using the definition of the quantum relative $\alpha$-entropy, from \eqref{1:quantumjones1} we have 
\begin{eqnarray}
\lefteqn {S_{\alpha}(M_{\rho, \sigma}^t || M_{\tau, \omega}^t)} \nonumber\\
   & = & \frac{\alpha}{1 - \alpha} \log \operatorname{Tr} \Bigg[  \Big(\frac{M_{\rho, \sigma}^t}{||M_{\rho, \sigma}^t||_{\alpha}}\Big) \Big(\frac{M_{\tau, \omega}^t}{||M_{\tau, \omega}^t||_{\alpha}}\Big)^{\alpha-1} \Bigg]  \nonumber \\
   & = & \frac{\alpha}{1 - \alpha} \log \operatorname{Tr} \Bigg[  \Big(\frac{\rho^t \sigma^{1-t}}{[\operatorname{Tr} (\rho^t \sigma^{1-t})]^{1/\alpha}}\Big) \Big(\frac{\tau^t \omega^{1-t}}{[\operatorname{Tr} (\tau^t \omega^{1-t})]^{1/\alpha}}\Big)^{\alpha-1} \Bigg]  \nonumber \\
   & = & \frac{\alpha}{1 - \alpha} \log \operatorname{Tr} \Bigg[ \left\{ \Big(\frac{\rho}{||\rho||_\alpha}\Big)^t \Big(\frac{\sigma}{||\sigma||_\alpha}\Big)^{1-t} \right\} \left\{ \Big(\frac{\tau}{||\tau||_\alpha}\Big)^t \Big(\frac{\omega}{||\omega||_\alpha}\Big)^{1-t} \right\}^{\alpha-1} \Bigg] \nonumber \\
   & + & \frac{\alpha}{1 - \alpha} \log \Bigg[ \Big(\operatorname{Tr} \{(\rho^t \sigma^{1-t})^{\alpha}\}\Big)^{-1/\alpha}  \Big(\operatorname{Tr} \{(\tau^t \omega^{1-t})^{\alpha}\}\Big)^{1-\alpha/\alpha}  \Bigg] \nonumber \\
   & + & \frac{\alpha}{1 - \alpha} \log \Bigg[ \Big(\frac{1}{||\rho||_\alpha}\Big)^{-t} \Big(\frac{1}{||\sigma||_\alpha}\Big)^{t-1} \Big(\frac{1}{||\tau||_\alpha}\Big)^{t(1-\alpha)} \Big(\frac{1}{||\omega||_\alpha}\Big)^{(1-t)(1-\alpha)}\Bigg] \nonumber \\
   & = & \frac{\alpha}{1 - \alpha} \log \operatorname{Tr} \Bigg[ \left\{ \Big(\frac{\rho}{||\rho||_\alpha}\Big)^t \Big(\frac{\sigma}{||\sigma||_\alpha}\Big)^{1-t} \right\} \left\{ \Big(\frac{\tau}{||\tau||_\alpha}\Big)^t \Big(\frac{\omega}{||\omega||_\alpha}\Big)^{1-t} \right\}^{\alpha-1} \Bigg] \nonumber \\
   & + & \frac{\alpha}{1 - \alpha} \log \Bigg[\Big(\operatorname{Tr} \left\{\Big(\frac{\rho}{||\rho||_\alpha}\Big)^t \Big(\frac{\sigma}{||\sigma||_\alpha}\Big)^{1-t} \right\}^\alpha \Big)^{-1/\alpha} \Big(\operatorname{Tr} \left\{\Big(\frac{\tau}{||\tau||_\alpha}\Big)^t \Big(\frac{\omega}{||\omega||_\alpha}\Big)^{1-t} \right\}^\alpha \Big)^{1-\alpha/\alpha} \Bigg] \nonumber \\
   & = & \frac{\alpha}{1 - \alpha} \log \operatorname{Tr} \Bigg[ \left\{ \Big(\frac{\rho}{||\rho||_\alpha}\Big)^t \Big(\frac{\sigma}{||\sigma||_\alpha}\Big)^{1-t} \right\} \left\{ \Big(\frac{\tau}{||\tau||_\alpha}\Big)^t \Big(\frac{\omega}{||\omega||_\alpha}\Big)^{1-t} \right\}^{\alpha-1} \Bigg] \nonumber \\
   & + & \frac{1}{\alpha-1} \log \operatorname{Tr} \Bigg[ \left\{\Big(\frac{\rho}{||\rho||_\alpha}\Big)^t \Big(\frac{\sigma}{||\sigma||_\alpha}\Big)^{1-t} \right\}^\alpha\Bigg] + \log \operatorname{Tr} \Bigg[\left\{\Big(\frac{\tau}{||\tau||_\alpha}\Big)^t \Big(\frac{\omega}{||\omega||_\alpha}\Big)^{1-t} \right\}^\alpha \Bigg]. \nonumber
\end{eqnarray}
So the expression reduces to 
\begin{eqnarray}
 {S_{\alpha}(M_{\rho, \sigma}^t || M_{\tau, \omega}^t)} 
   & = & \frac{\alpha}{1 - \alpha} \log \operatorname{Tr} \Bigg[ \left\{ \Big(\frac{\rho}{||\rho||_\alpha}\Big)^t \Big(\frac{\sigma}{||\sigma||_\alpha}\Big)^{1-t} \right\} \left\{ \Big(\frac{\tau}{||\tau||_\alpha}\Big)^t \Big(\frac{\omega}{||\omega||_\alpha}\Big)^{1-t} \right\}^{\alpha-1} \Bigg]  \nonumber \\
   & + & \frac{1}{\alpha-1} \log (Z_{\rho, \sigma}^t) + \log (Z_{\tau, \omega}^t). \nonumber
\end{eqnarray}

For any two positive semi-definite, Hermitian matrices $A$ and $B,$ which are commutative, $(AB)^m = A^m B^m,$ for any real number $m.$ So we obtain 
\begin{align*}
    &\operatorname{Tr} \Bigg[ \left\{ \Big(\frac{\rho}{||\rho||_\alpha}\Big)^t \Big(\frac{\sigma}{||\sigma||_\alpha}\Big)^{1-t} \right\} \left\{ \Big(\frac{\tau}{||\tau||_\alpha}\Big)^t \Big(\frac{\omega}{||\omega||_\alpha}\Big)^{1-t} \right\}^{\alpha-1} \Bigg] \nonumber \\
    &= \operatorname{Tr} \Bigg[ \Big(\frac{\rho}{||\rho||_\alpha}\Big)^t \Big(\frac{\sigma}{||\sigma||_\alpha}\Big)^{1-t} \Big(\frac{\tau}{||\tau||_\alpha}\Big)^{t(\alpha-1)} \Big(\frac{\omega}{||\omega||_\alpha}\Big)^{(1-t)(\alpha-1)} \Bigg] \nonumber \\
    &= \operatorname{Tr} \Bigg[ \Big(\frac{\rho}{||\rho||_\alpha}\Big)^t \Big(\frac{\tau}{||\tau||_\alpha}\Big)^{t(\alpha-1)} \Big(\frac{\sigma}{||\sigma||_\alpha}\Big)^{1-t} \Big(\frac{\omega}{||\omega||_\alpha}\Big)^{(1-t)(\alpha-1)} \Bigg]
    \nonumber \\
    &= \operatorname{Tr} \Bigg[ \left\{ \Big(\frac{\rho}{||\rho||_\alpha}\Big) \Big(\frac{\tau}{||\tau||_\alpha}\Big)^{(\alpha-1)} \right\}^{t} \left\{\Big(\frac{\sigma}{||\sigma||_\alpha}\Big) \Big(\frac{\omega}{||\omega||_\alpha}\Big)^{(\alpha-1)}\right\}^{1-t} \Bigg] \nonumber \\
    &\leq \Bigg[ \operatorname{Tr} \left\{ \Big(\frac{\rho}{||\rho||_\alpha}\Big) \Big(\frac{\tau}{||\tau||_\alpha}\Big)^{(\alpha-1)} \right\} \Bigg]^t \Bigg[ \operatorname{Tr} \left\{\Big(\frac{\sigma}{||\sigma||_\alpha}\Big) \Big(\frac{\omega}{||\omega||_\alpha}\Big)^{(\alpha-1)}\right\} \Bigg]^{1-t}, \nonumber
\end{align*}
where the last inequality is deduced following the fact that
\[
\operatorname{Tr}[(A^m) (B^{1-m})] \leq [\operatorname{Tr}(A)]^m [\operatorname{Tr}(B)]^{1-m}
\]
for any positive semi-definite matrices $A$ and $B,$ with $m \in [0,1]$ \cite{zhang11}, which is again backed by H\"{o}lder's inequality.

So when $\alpha < 1,$
\begin{align*}
    &\frac{\alpha}{1 - \alpha} \log \operatorname{Tr} \Bigg[ \left\{ \Big(\frac{\rho}{||\rho||_\alpha}\Big)^t \Big(\frac{\sigma}{||\sigma||_\alpha}\Big)^{1-t} \right\} \left\{ \Big(\frac{\tau}{||\tau||_\alpha}\Big)^t \Big(\frac{\omega}{||\omega||_\alpha}\Big)^{1-t} \right\}^{\alpha-1} \Bigg] \nonumber \\
    &\leq \frac{\alpha}{1 - \alpha} \log \Bigg[ \operatorname{Tr} \left\{ \Big(\frac{\rho}{||\rho||_\alpha}\Big) \Big(\frac{\tau}{||\tau||_\alpha}\Big)^{(\alpha-1)} \right\} \Bigg]^t \Bigg[ \operatorname{Tr} \left\{\Big(\frac{\sigma}{||\sigma||_\alpha}\Big) \Big(\frac{\omega}{||\omega||_\alpha}\Big)^{(\alpha-1)}\right\} \Bigg]^{1-t} \nonumber \\
    &= t \Big(\frac{\alpha}{1 - \alpha}\Big) \log \Bigg[ \operatorname{Tr} \left\{ \Big(\frac{\rho}{||\rho||_\alpha}\Big) \Big(\frac{\tau}{||\tau||_\alpha}\Big)^{(\alpha-1)} \right\} \Bigg] + (1-t) \Big(\frac{\alpha}{1 - \alpha}\Big)
    \log \Bigg[ \operatorname{Tr} \left\{ \Big(\frac{\sigma}{||\sigma||_\alpha}\Big) \Big(\frac{\omega}{||\omega||_\alpha}\Big)^{(\alpha-1)} \right\} \Bigg] \nonumber \\
    &= t S_{\alpha}(\rho||\tau) + (1-t) S_{\alpha}(\sigma||\omega).
\end{align*}

And finally, we have 
\begin{eqnarray}
    {S_{\alpha}(M_{\rho, \sigma}^t || M_{\tau, \omega}^t)}
    & \leq & t S_{\alpha}(\rho||\tau) + (1-t) S_{\alpha}(\sigma||\omega) + \frac{1}{\alpha-1} \log (Z_{\rho, \sigma}^t) + \log (Z_{\tau, \omega}^t). \nonumber
\end{eqnarray}

For $\alpha > 1,$ the fraction $\frac{\alpha}{1 - \alpha} < 0.$ This flips the direction of the argument, reversing the final inequality. 
\end{IEEEproof}

\begin{remark}
    \begin{eqnarray}
    Z_{\rho, \sigma}^t = \operatorname{Tr} \Bigg[ \left\{\Big(\frac{\rho}{||\rho||_\alpha}\Big)^t \Big(\frac{\sigma}{||\sigma||_\alpha}\Big)^{1-t} \right\}^\alpha\Bigg] \nonumber
    & = & \operatorname{Tr} \Bigg[ \Big(\frac{\rho}{||\rho||_\alpha}\Big)^{\alpha t} \Big(\frac{\sigma}{||\sigma||_\alpha}\Big)^{\alpha(1-t)} \Bigg] \\ \nonumber
    & = & \operatorname{Tr} \Big[(\rho^{\alpha})^t (\sigma^\alpha)^{1-t}
    \Big] \Big[\frac{1}{||\rho||_\alpha}\Big]^{\alpha t} \Big[\frac{1}{||\sigma||_\alpha}\Big]^{\alpha(1-t)} \\ \nonumber
    & \leq & \Big[\operatorname{Tr} (\rho^\alpha)\Big]^t \Big[\operatorname{Tr} (\sigma^\alpha)\Big]^{1-t}  \Big[\frac{1}{\operatorname{Tr} (\rho^\alpha)} \Big]^t \Big[\frac{1}{\operatorname{Tr} (\sigma^\alpha)} \Big]^{1-t} \\ \nonumber
    & = & 1.
\end{eqnarray}
Thus, we have
\begin{equation*}
    \log (Z_{\rho, \sigma}^t) \leq 0.
\end{equation*}
\end{remark}

When the density matrices are all classical states, the expression \eqref{Convexity} represents this notion of generalized convexity of the classical relative-$\alpha$-entropy, $J_{\alpha}(p || q)$ as in \eqref{Jones}. The Petz-R\'enyi-$\alpha$ relative entropy $\hat{D}_\alpha(\rho||\sigma)$ and Sandwiched R\'enyi divergence, $D_{\alpha}^{*}(\rho||\sigma)$ also follow this notion of generalized convexity, as stated below.

\begin{corollary}
\label{RenyiGenConvex}
 Let $\rho, \sigma, \tau, \omega$ be density matrices in $\mathcal{A}$ and $t \in [0,1].$ Then for $\alpha > 1,$
\begin{eqnarray}
    {\hat{D}_\alpha(M_{\rho, \sigma}^t || M_{\tau, \omega}^t)}
    & \leq & t \hat{D}_\alpha(\rho||\tau) + (1-t) \hat{D}_\alpha(\sigma||\omega) + \frac{\alpha}{1-\alpha} \log [\operatorname{Tr}(\rho^t \sigma^{(1-t)})] + \log [\operatorname{Tr}(\tau^t \omega^{(1-t)})], \nonumber
\end{eqnarray} 
where $\hat{D}_\alpha(\rho||\sigma)$ is the Petz-R\'enyi-$\alpha$ relative entropy \eqref{RenyiRelativeentropy}. The inequality is reversed when $\alpha < 1$.
\end{corollary}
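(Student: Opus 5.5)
Since $\rho,\sigma,\tau,\omega$ lie in the generalized convex set $\mathcal{A}$, the preceding lemma makes them mutually commuting and simultaneously diagonalizable. My plan is to compute $\hat{D}_\alpha(M_{\rho,\sigma}^t\|M_{\tau,\omega}^t)$ directly from \eqref{RenyiRelativeentropy}, mirroring the proof of the generalized convexity of $S_\alpha$ but without the $\alpha$-norm normalizations. Write $a:=\operatorname{Tr}(\rho^t\sigma^{1-t})$ and $b:=\operatorname{Tr}(\tau^t\omega^{1-t})$. Commutativity gives $(AB)^m=A^mB^m$ for commuting positive semidefinite $A,B$, restricting to supports for negative powers as in Lemma~\ref{Lemma:Properties}. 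Hence
\[
\operatorname{Tr}\bigl[(M_{\rho,\sigma}^t)^{\alpha}(M_{\tau,\omega}^t)^{1-\alpha}\bigr]
= a^{-\alpha}\, b^{-(1-\alpha)}\,\operatorname{Tr}\bigl[\rho^{\alpha t}\sigma^{\alpha(1-t)}\tau^{(1-\alpha)t}\omega^{(1-\alpha)(1-t)}\bigr].
\]

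Next I regroup the factors, again using commutativity, as
\[
\operatorname{Tr}\bigl[(\rho^{\alpha}\tau^{1-\alpha})^{t}(\sigma^{\alpha}\omega^{1-\alpha})^{1-t}\bigr].
\]
Both $\rho^\alpha\tau^{1-\alpha}$ and $\sigma^\alpha\omega^{1-\alpha}$ are commuting positive semidefinite operators. I then apply the trace inequality $\operatorname{Tr}[A^mB^{1-m}]\le[\operatorname{Tr}A]^m[\operatorname{Tr}B]^{1-m}$ with $m=t$, exactly as in the previous proof; in the common eigenbasis this is just classical H\"older with exponents $1/t$ and $1/(1-t)$. This bounds the trace above by $[\operatorname{Tr}(\rho^\alpha\tau^{1-\alpha})]^t[\operatorname{Tr}(\sigma^\alpha\omega^{1-\alpha})]^{1-t}$.

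Taking logarithms and multiplying by $\frac{1}{\alpha-1}$, which is positive for $\alpha>1$, preserves the inequality. The H\"older part then becomes $t\hat{D}_\alpha(\rho\|\tau)+(1-t)\hat{D}_\alpha(\sigma\|\omega)$. The prefactor $a^{-\alpha}b^{-(1-\alpha)}$ contributes
\[
\frac{-\alpha}{\alpha-1}\log a+\frac{-(1-\alpha)}{\alpha-1}\log b=\frac{\alpha}{1-\alpha}\log a+\log b,
\]
which are precisely the two correction terms in the statement. For $\alpha<1$ the factor $\frac{1}{\alpha-1}$ is negative, so the same chain yields the reversed inequality.

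The only delicate point is the bookkeeping of supports and zero eigenvalues, where negative powers such as $\tau^{1-\alpha}$ for $\alpha>1$ appear. I would handle this, as in Lemma~\ref{Lemma:Properties}, by working on supports with the paper's conventions $0\cdot(\pm\infty)=0$ and $\log(+\infty)=+\infty$. If $\operatorname{supp}(\rho)\not\subseteq\operatorname{supp}(\tau)$ or $\operatorname{supp}(\sigma)\not\subseteq\operatorname{supp}(\omega)$, the right-hand side is $+\infty$ for $\alpha>1$ and the inequality is trivial. The regrouping and H\"older steps are otherwise routine once the states are jointly diagonalized.
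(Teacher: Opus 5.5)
Your proposal is correct and follows essentially the same route as the paper. You pull out the scalar normalizations $a^{-\alpha}b^{\alpha-1}$, use commutativity to regroup the trace as $\operatorname{Tr}\bigl[(\rho^{\alpha}\tau^{1-\alpha})^{t}(\sigma^{\alpha}\omega^{1-\alpha})^{1-t}\bigr]$, apply the H\"older trace inequality, and multiply by $\frac{1}{\alpha-1}$, whose sign fixes the direction of the inequality; your support bookkeeping is a small addition the paper omits, and your correction terms $\frac{\alpha}{1-\alpha}\log\operatorname{Tr}(\rho^{t}\sigma^{1-t})+\log\operatorname{Tr}(\tau^{t}\omega^{1-t})$ are the right ones (the paper's last display writes $\operatorname{Tr}(\rho^{\alpha}\sigma^{1-\alpha})$ and $\operatorname{Tr}(\tau^{\alpha}\omega^{1-\alpha})$ there, which is a typo).
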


\begin{IEEEproof}
    Using the definition \eqref{RenyiRelativeentropy}, we have 

\begin{equation}
    {\hat{D}_\alpha(M_{\rho, \sigma}^t || M_{\tau, \omega}^t)} = \frac{1}{\alpha-1} \log \operatorname{Tr}\Bigg[\left\{\frac{\rho^t \sigma^{(1-t)}}{\operatorname{Tr}(\rho^t \sigma^{(1-t)})}\right\}^\alpha \left\{\frac{\tau^t \omega^{(1-t)}}{\operatorname{Tr}(\tau^t \omega^{(1-t)})}\right\}^{1-\alpha} \Bigg]. \nonumber
\end{equation}

Here 
\begin{align*}
    &\operatorname{Tr}\Bigg[\left\{\frac{\rho^t \sigma^{(1-t)}}{\operatorname{Tr}(\rho^t \sigma^{(1-t)})}\right\}^\alpha \left\{\frac{\tau^t \omega^{(1-t)}}{\operatorname{Tr}(\tau^t \omega^{(1-t)})}\right\}^{1-\alpha} \Bigg] \nonumber \\
    &= \Big[\frac{1}{\operatorname{Tr}(\rho^t \sigma^{(1-t)})}\Big]^\alpha \Big[\frac{1}{\operatorname{Tr}(\tau^t \omega^{(1-t)})}\Big]^{1-\alpha} \operatorname{Tr}\Big[(\rho^{\alpha t} \sigma^{\alpha(1-t)})(\tau^{t(1-\alpha)} \omega^{(1-t)(1-\alpha)})\Big]  \nonumber \\
    &= \Big[\operatorname{Tr}(\rho^t \sigma^{(1-t)})\Big]^{-\alpha}\Big[\operatorname{Tr}(\tau^t \omega^{(1-t)})\Big]^{\alpha-1} \operatorname{Tr}\Big[\rho^{\alpha t} \tau^{t(1-\alpha)} \sigma^{\alpha(1-t)} \omega^{(1-t)(1-\alpha)}\Big]. \nonumber
\end{align*}

And by H\"{o}lder's inequality,
\begin{align*}
     \operatorname{Tr}\Big[\rho^{\alpha t} \tau^{t(1-\alpha)} \sigma^{\alpha(1-t)} \omega^{(1-t)(1-\alpha)}\Big] 
     & = \operatorname{Tr}\Big[\left\{\rho^{\alpha} \tau^{(1-\alpha)}\right\}^t \left\{\sigma^{\alpha} \omega^{(1-\alpha)}\right\}^{1-t} \Big]\nonumber \\
     & \leq \Big[\operatorname{Tr}(\rho^{\alpha} \tau^{(1-\alpha)})\Big]^t \Big[\operatorname{Tr}(\sigma^{\alpha} \omega^{(1-\alpha)})\Big]^{1-t}.  \nonumber
\end{align*}

So when $\alpha > 1$
\begin{align*}
    & \frac{1}{\alpha-1} \log \operatorname{Tr}\Bigg[\left\{\frac{\rho^t \sigma^{(1-t)}}{\operatorname{Tr}(\rho^t \sigma^{(1-t)})}\right\}^\alpha \left\{\frac{\tau^t \omega^{(1-t)}}{\operatorname{Tr}(\tau^t \omega^{(1-t)})}\right\}^{1-\alpha} \Bigg] \nonumber \\
    & \leq \frac{1}{\alpha-1} \log \Big[\left\{\operatorname{Tr}(\rho^t \sigma^{(1-t)})\right\}^{-\alpha} \left\{\operatorname{Tr}(\tau^t \omega^{(1-t)})\right\}^{\alpha-1} \left\{\operatorname{Tr}(\rho^{\alpha} \tau^{(1-\alpha)})\right\}^t \left\{\operatorname{Tr}(\sigma^{\alpha} \omega^{(1-\alpha)})\right\}^{1-t} \Big] \nonumber \\
    & = t \Big(\frac{1}{\alpha-1}\Big) \log \Big[\operatorname{Tr}(\rho^{\alpha} \tau^{(1-\alpha)}) \Big] + (1-t) \Big(\frac{1}{\alpha-1}\Big) \log \Big[ \operatorname{Tr}(\sigma^{\alpha} \omega^{(1-\alpha)}) \Big] \nonumber\\ 
    & + \Big(\frac{\alpha}{1-\alpha}\Big) \log \Big[\operatorname{Tr}(\rho^{\alpha} \sigma^{(1-\alpha)}) \Big] + \log \Big[\operatorname{Tr}(\tau^{\alpha} \omega^{(1-\alpha)})\Big], \nonumber
\end{align*}

which proves the statement. For $\alpha < 1,$ the fraction $\frac{1}{\alpha-1} < 0,$ and the inequality is reversed.
\end{IEEEproof}

\begin{remark}
    \begin{itemize}
        \item[1.] The Sandwiched R\'enyi divergence $D_{\alpha}^{*}(\rho||\sigma)$ reduces to the Petz-R\'enyi-$\alpha$-relative entropy $\Hat{D}_{\alpha}(\rho||\sigma)$ for commutating density matrices. Consequently, when restricted to the generalized convex set $\mathcal{A}$, defined in definition \eqref{Defn:ConvexSet}, $D_{\alpha}^{*}(\rho||\sigma)$ also satisfies the inequality established above for the same range of $\alpha$. 
        \item [2.]  The Petz-R\'enyi-$\alpha$-relative entropy $\Hat{D}_{\alpha}(\rho||\sigma)$ is jointly convex in the standard sense but only for $\alpha \in [0,1]$. Corollary \ref{RenyiGenConvex} introduces an alternative, generalized joint convexity structure when $\alpha > 1$.
    \end{itemize}  
\end{remark}


\section{Quantum Relative-$\alpha$-entropy and Other Information Measures}
\label{5sec:Connectionsdiscussions}

In this section, we investigate the limiting behavior of the quantum relative $\alpha$-entropy $S_\alpha$ and its connections with other popular quantum information
measures. In particular, we analyze the limits of $S_\alpha$ as the parameter $\alpha$ approaches specific values at which well-known entropic quantities are recovered, including Umegaki's relative entropy and several R\'enyi-type quantum
divergences. These results establish the continuity properties of $S_\alpha$ with respect to $\alpha$ and position it within the broader landscape of quantum information measures.

The limiting relations derived here serve not only as consistency checks for the proposed divergence but also provide insight into its operational and interpretational significance. In particular, several R\'enyi-type quantum
divergences have been introduced in the literature. Analogous to the classical setting, we explicitly connect $S_\alpha$ to the Petz-R\'enyi-$\alpha$ relative entropy $\hat{D}_\alpha$, thereby bridging inside the class of generalized divergence measures.
Such connections may facilitate comparisons and enable potential applications of $S_\alpha$ across different inferential and information-theoretic settings.

\begin{lemma}
\label{Lemma:limitting}
For any two density operators $\rho$ and $\sigma$,
     $S_{\alpha}(\rho || \sigma) \to U(\rho||\sigma)$ as $\alpha \to 1$.
\end{lemma}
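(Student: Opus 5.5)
We need the limit $\alpha\to1$ of $S_\alpha(\rho\|\sigma)$. Assume first $\operatorname{supp}(\rho)\subseteq\operatorname{supp}(\sigma)$; otherwise both sides are $+\infty$ by Definition~\ref{Defn:Divergence} and the convention for $U$. Negative powers of $\sigma$ are understood on its support, as in Lemma~\ref{Lemma:Properties}.

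Set $\beta=\alpha-1\to0$ and define the scalar functions
\[
f(\beta)=\log\operatorname{Tr}(\rho\sigma^{\beta}),\qquad g(\beta)=\log\operatorname{Tr}(\rho^{1+\beta}),\qquad h(\beta)=\log\operatorname{Tr}(\sigma^{1+\beta}).
\]
At $\beta=0$ we have $\operatorname{Tr}(\rho\sigma^0)=\operatorname{Tr}(\rho P_\sigma)=1$, because the support projection $P_\sigma$ fixes $\rho$ under the support condition. Also $\operatorname{Tr}\rho=\operatorname{Tr}\sigma=1$, so $f(0)=g(0)=h(0)=0$. The definition \eqref{quantumjonesetal} then reads
\[
S_\alpha(\rho\|\sigma)=-\frac{1+\beta}{\beta}f(\beta)+\frac{1}{\beta}g(\beta)+h(\beta).
\]
Each term is a $0/0$ form, or tends to $0$, as $\beta\to0$.

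Differentiability at $\beta=0$ is easiest to see through the spectral representation of Lemma~\ref{Lemma:traceequation}. Write $\operatorname{Tr}(\rho\sigma^\beta)=\sum_{i,j:\,q_j>0}p_iq_j^{\beta}|\langle x_i|y_j\rangle|^2$, $\operatorname{Tr}\rho^{1+\beta}=\sum_{p_i>0}p_i e^{\beta\log p_i}$, and similarly for $\sigma$. These are finite sums of exponentials in $\beta$, hence smooth. Differentiating at $\beta=0$ gives
\[
f'(0)=\sum_{i,j}p_i\log q_j|\langle x_i|y_j\rangle|^2=\operatorname{Tr}(\rho\log\sigma),
\]
where the last equality is Lemma~\ref{Lemma:traceequation} with $q_j^{\alpha-1}$ replaced by $\log q_j$. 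Likewise $g'(0)=\operatorname{Tr}(\rho\log\rho)$.

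Consequently $f(\beta)/\beta\to\operatorname{Tr}(\rho\log\sigma)$ and $g(\beta)/\beta\to\operatorname{Tr}(\rho\log\rho)$, by the definition of the derivative or by L'Hôpital. Since $h$ is continuous with $h(0)=0$, $h(\beta)\to0$, and the factor $(1+\beta)\to1$. Putting these together,
\[
S_\alpha(\rho\|\sigma)\to-\operatorname{Tr}(\rho\log\sigma)+\operatorname{Tr}(\rho\log\rho)=U(\rho\|\sigma).
\]

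The main obstacle is the support bookkeeping. One must ensure $f(0)=0$, which requires $\operatorname{Tr}(\rho P_\sigma)=1$ and follows from Lemma~\ref{Lemma:supportcharacterization}: the terms with $q_j=0$ have $|\langle x_i|y_j\rangle|^2=0$ whenever $p_i>0$. This same fact is what justifies differentiating term by term without $\log 0$ terms appearing. In the unsupported case, note that $U=+\infty$ and that $S_\alpha=+\infty$ by definition for all $\alpha$. Hence the limit holds trivially there, provided one keeps Definition~\ref{Defn:Divergence}'s convention rather than the relaxed finiteness described in Remark~\ref{Remark:finiteness}.
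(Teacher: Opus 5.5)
Your proof is correct and follows essentially the same route as the paper. Both compute termwise limits of the three pieces of the spectral representation \eqref{quantumjones2}, evaluate each as a derivative at $\alpha=1$ (equivalently by L'H\^opital), and identify the results with $\operatorname{Tr}(\rho\log\rho)$ and $\operatorname{Tr}(\rho\log\sigma)$ via the analogue of Lemma~\ref{Lemma:traceequation}. Your version is more careful than the paper's in two places: you make explicit that $\operatorname{Tr}(\rho\sigma^{0})=1$, which the first term's $0/0$ form requires, depends on $\operatorname{supp}(\rho)\subseteq\operatorname{supp}(\sigma)$, and you treat the unsupported case separately.
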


\begin{IEEEproof}
To prove the statement above we use the expression \eqref{quantumjones2} of the quantum relative $\alpha$-entropy. It is observed that 
\begin{eqnarray}
      \lim_{\alpha \to 1}  \frac{1}{1 - \alpha}\log \sum_{i=1}^n p_{i}^{\alpha}  =  - \lim_{\alpha \to 1} \frac{\sum_{i=1}^n p_{i}^{\alpha} \log p_{i}}{\sum_{i=1}^n p_{i}^{\alpha}} 
       =  - \sum_{i=1}^n p_{i} \log p_{i}, \nonumber
\end{eqnarray}

and $ \lim\limits_{\alpha \to 1} \log \sum_{j=1}^n q_{j}^{\alpha} = 0 $, since $\sum_{i =1}^n p_{i} = \sum_{j =1}^n q_{j} = 1$. 

Furthermore,
\begin{eqnarray}
      \lim_{\alpha \to 1}  \frac{\alpha}{1 - \alpha}\log \sum_{i,j =1}^n p_{i} q_{j}^{\alpha - 1} |\langle x_{i}|y_{j}\rangle|^2 
      & = & - \lim_{\alpha \to 1} \frac{\frac{\partial}{\partial \alpha}[\sum_{i,j =1}^n p_{i} q_{j}^{\alpha - 1} |\langle x_{i}|y_{j}\rangle|^2]^{\alpha}}{[\sum_{i,j =1}^n p_{i} q_{j}^{\alpha - 1} |\langle x_{i}|y_{j}\rangle|^2]^{\alpha}} \nonumber \\
      & = & - \log \sum_{i,j =1}^n p_{i} - \sum_{i,j =1}^n p_{i} \log q_{j} |\langle x_{i}|y_{j}\rangle|^2 \nonumber \\
      & = & - \sum_{i,j =1}^n p_{i} \log q_{j} |\langle x_{i}|y_{j}\rangle|^2. \nonumber
\end{eqnarray}
Accumulating all these we get, 
\begin{equation}
\label{umegakiderived}
\lim_{\alpha \to 1} S_{\alpha}(\rho || \sigma)  =  \sum_{i=1}^n p_{i} \log p_{i} - \sum_{i,j =1}^n p_{i} \log q_{j} |\langle x_{i}|y_{j}\rangle|^2.
\end{equation}

The right-hand side of \eqref{umegakiderived} complies with Umegaki's relative entropy $U(\rho||\sigma)$ as defined in \eqref{Umegaki}. Following the reasoning behind Lemma \ref{Lemma:traceequation}, we can calculate $\operatorname{Tr}(\rho \log \rho)$ and $\operatorname{Tr}(\rho \log \sigma)$ instead of $\operatorname{Tr}(\rho \sigma^{\alpha - 1})$ analogously, to show that 
\begin{equation}
     \operatorname{Tr}(\rho \log \rho) = \sum_{i=1}^n p_{i} \log p_{i} \quad \text{and} \quad  \operatorname{Tr}(\rho \log \sigma) = \sum_{i,j =1}^n p_{i} \log q_{j} |\langle x_{i}|y_{j}\rangle|^2. \nonumber
\end{equation} 
This completes the proof.
\end{IEEEproof}

For any two density operators $\rho$ and $\sigma$,  let us define the transformed matrices below.
\begin{equation}
\label{modidensity}
    \rho^{(\alpha)} = \frac{\rho^{\alpha}}{\operatorname{Tr}(\rho^{\alpha})} \quad \text{and} \quad \sigma^{(\alpha)} = \frac{\sigma^{\alpha}}{\operatorname{Tr}(\sigma^{\alpha})} \quad \text{for}~ \alpha > 0.
\end{equation}

It can easily be confirmed that $\rho^{(\alpha)}$ and $\sigma^{(\alpha)}$ are also density matrices. Further, there is a one-to-one correspondence of the above density matrices with $\rho$ and $\sigma$ respectively, since all the defining properties of $\rho$ and $\sigma$ translate to them as well. Analogous transformations based on probability distributions are popular in the context of non-extensive physics and robust statistical inference \cite{Naudts04, Tsallis88J, TsallisMP98J}. Such transformed distributions are well-known as $\alpha$-escort measure and $\alpha$-scaled measure \cite{KumarS16J, GayenK21J}.

\begin{lemma}
\label{Lemma:RenyiRelation}
    $S_{\alpha}(\rho || \sigma)$ is related to the Petz-R\'enyi-$\alpha$-relative entropy, $\Hat{D}_{\alpha}(\rho||\sigma)$ by 
    \begin{equation}
    S_{\alpha}(\rho || \sigma) = \Hat{D}_{1 / \alpha}(\rho^{(\alpha)} || \sigma^{(\alpha)}), 
    \end{equation}
    where $\Hat{D}_{\alpha}(\rho||\sigma)$ is as in \eqref{RenyiRelativeentropy} and $\rho^{(\alpha)}, \sigma^{(\alpha)}$ are as in \eqref{modidensity}.
\end{lemma}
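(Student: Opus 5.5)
The plan is a direct computation: substitute the escort-transformed states \eqref{modidensity} into the definition \eqref{RenyiRelativeentropy} of the Petz-R\'enyi relative entropy with order $1/\alpha$, then match the result term by term with the normalized representation \eqref{1:quantumjones1} of $S_\alpha(\rho\|\sigma)$. With order $\beta = 1/\alpha$, the prefactor is
\[
\frac{1}{\beta-1} = \frac{1}{1/\alpha - 1} = \frac{\alpha}{1-\alpha},
\]
which is exactly the prefactor in \eqref{1:quantumjones1}. So everything reduces to showing
\[
\operatorname{Tr}\!\left[(\rho^{(\alpha)})^{1/\alpha}(\sigma^{(\alpha)})^{1-1/\alpha}\right]
=\operatorname{Tr}\!\left[\frac{\rho}{\|\rho\|_\alpha}\left(\frac{\sigma}{\|\sigma\|_\alpha}\right)^{\alpha-1}\right].
\]

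The key step uses the functional calculus. Write $\rho=\sum_i p_i|x_i\rangle\langle x_i|$ and $\sigma=\sum_j q_j|y_j\rangle\langle y_j|$ as in Lemma~\ref{Lemma:traceequation}. Then $(\rho^\alpha)^{1/\alpha}=\sum_i (p_i^\alpha)^{1/\alpha}|x_i\rangle\langle x_i|=\rho$. Similarly, $(\sigma^\alpha)^{(\alpha-1)/\alpha}=\sigma^{\alpha-1}$, interpreted on $\operatorname{supp}(\sigma)$ when $\alpha<1$, where the exponent is negative.

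The scalar normalizations pull out of the powers:
\[
(\rho^{(\alpha)})^{1/\alpha}=\frac{\rho}{(\operatorname{Tr}\rho^\alpha)^{1/\alpha}}=\frac{\rho}{\|\rho\|_\alpha},
\qquad
(\sigma^{(\alpha)})^{(\alpha-1)/\alpha}=\frac{\sigma^{\alpha-1}}{(\operatorname{Tr}\sigma^\alpha)^{(\alpha-1)/\alpha}}=\left(\frac{\sigma}{\|\sigma\|_\alpha}\right)^{\alpha-1}.
\]
Substituting these into the trace gives the required identity. Equivalently, one can expand both sides via Lemma~\ref{Lemma:traceequation} as double sums over $p_i q_j^{\alpha-1}|\langle x_i|y_j\rangle|^2$ with the same scalar factors, and compare with \eqref{quantumjones2}.

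The main obstacle is not the algebra but the bookkeeping at the edge cases. First, one must check that $\rho^{(\alpha)}$ and $\sigma^{(\alpha)}$ have the same supports as $\rho$ and $\sigma$. This holds because $t\mapsto t^\alpha$ fixes $0$ and maps positive eigenvalues to positive eigenvalues, with the same eigenvectors. Hence $\operatorname{supp}(\rho^{(\alpha)})\subseteq\operatorname{supp}(\sigma^{(\alpha)})$ if and only if $\operatorname{supp}(\rho)\subseteq\operatorname{supp}(\sigma)$. Second, the convention assigning $+\infty$ to both sides in the non-support case must agree. I would state that the identity is proved under $\operatorname{supp}(\rho)\subseteq\operatorname{supp}(\sigma)$, and that outside this case both sides are $+\infty$ under the paper's conventions. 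The negative power $\sigma^{\alpha-1}$ for $\alpha<1$ is taken as the generalized inverse power on the support, consistent with Lemma~\ref{Lemma:Properties}.
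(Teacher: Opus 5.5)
Your main computation is the paper's own proof. You substitute \eqref{modidensity} into \eqref{RenyiRelativeentropy} at order $1/\alpha$ and use $\frac{1}{1/\alpha-1}=\frac{\alpha}{1-\alpha}$. Then $(\rho^\alpha)^{1/\alpha}=\rho$ and $(\sigma^\alpha)^{(\alpha-1)/\alpha}=\sigma^{\alpha-1}$, with the scalar normalizations pulled out, give exactly \eqref{1:quantumjones1}. Under $\operatorname{supp}(\rho)\subseteq\operatorname{supp}(\sigma)$ this is correct and complete.

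The one claim that does not hold is your edge-case assertion that both sides equal $+\infty$ outside the support condition. It is fine for $0<\alpha<1$. There $\hat{D}_{1/\alpha}$ has order greater than $1$, and under the usual convention it is $+\infty$ when supports are not nested, because $\operatorname{supp}(\rho^{(\alpha)})=\operatorname{supp}(\rho)$ and $\operatorname{supp}(\sigma^{(\alpha)})=\operatorname{supp}(\sigma)$.

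It fails for $\alpha>1$. Then $1/\alpha<1$, both exponents $1/\alpha$ and $1-1/\alpha$ are positive, and the Petz quantity is finite whenever $\operatorname{Tr}(\rho\sigma^{\alpha-1})>0$. Meanwhile $S_\alpha=+\infty$ by Definition~\ref{Defn:Divergence}. For example, take $\rho=I/2$ on $\mathbb{C}^2$, $\sigma=|0\rangle\langle 0|$ and $\alpha=2$. Then $\rho^{(2)}=I/2$ and $\sigma^{(2)}=\sigma$, so $\hat{D}_{1/2}(\rho^{(2)}\|\sigma^{(2)})=-2\log\bigl(1/\sqrt{2}\bigr)=\log 2$, whereas $S_2(\rho\|\sigma)=+\infty$.

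So you should state the identity only under $\operatorname{supp}(\rho)\subseteq\operatorname{supp}(\sigma)$. For $\alpha>1$, the two sides agree off this set only if you drop the $+\infty$ clause and compare the bare formulas. The paper's proof is a purely formal computation and is silent on this point. The restriction is therefore a needed qualification of the lemma as stated, not a flaw in your algebra.
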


\begin{IEEEproof}
From  \eqref{RenyiRelativeentropy}, we have 
\begin{eqnarray}
    \Hat{D}_{1 / \alpha}(\rho^{(\alpha)} || \sigma^{(\alpha)})
    & = & \frac{1}{\frac{1}{\alpha} - 1} \log \operatorname{Tr} \Big[ (\rho^{(\alpha)})^{1/\alpha} (\sigma^{(\alpha)})^{1- \frac{1}{\alpha}} \Big] \nonumber \\
    & = & \frac{\alpha}{1 - \alpha} \log \operatorname{Tr} \Big[ \frac{\rho}{(\operatorname{Tr} \rho^{\alpha})^{1/\alpha}} \Big(\frac{\sigma}{(\operatorname{Tr} \sigma^{\alpha})^{1/\alpha}} \Big)^{\alpha - 1} \Big], \nonumber
\end{eqnarray}
which coincides with \eqref{1:quantumjones1}.
\end{IEEEproof}               

\begin{lemma}
\label{Lemma:RenyiEntropyRelation}
The quantum relative $\alpha$-entropy is related to the quantum R\'enyi entropy, $R_\alpha(\rho)$, through the following relation.
    \begin{equation}
    S_{\alpha}(\rho || \sigma) = \frac{\alpha}{1 - \alpha}\log \operatorname{Tr} (\rho \sigma^{\alpha - 1}) + \log \operatorname{Tr} (\sigma ^\alpha) - R_\alpha(\rho), \nonumber
    \end{equation}
    where $R_\alpha(\rho) = \frac{1}{1-\alpha} \log \operatorname{Tr}(\rho^\alpha)$ is defined for $\alpha>0, \alpha \neq 1.$ 
    \end{lemma}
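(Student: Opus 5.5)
This identity is a direct rearrangement of Definition~\ref{Defn:Divergence}. I would start from \eqref{quantumjonesetal},
\[
S_{\alpha}(\rho \| \sigma)
= \frac{\alpha}{1-\alpha} \log \operatorname{Tr}\!\left( \rho \sigma^{\alpha-1} \right)
- \frac{1}{1-\alpha} \log \operatorname{Tr}\!\left( \rho^{\alpha} \right)
+ \log \operatorname{Tr}\!\left( \sigma^{\alpha} \right),
\]
and note that the middle term is exactly $-R_\alpha(\rho)$ by the stated definition $R_\alpha(\rho) = \frac{1}{1-\alpha}\log\operatorname{Tr}(\rho^\alpha)$. Substituting this gives the claimed expression, and no inequality is involved.

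Next, I would confirm that every term is well defined and finite whenever $S_\alpha(\rho\|\sigma)$ is, so the rearrangement is legitimate rather than formal. As observed before Lemma~\ref{Lemma:supportcharacterization}, $0<\operatorname{Tr}(\rho^\alpha)<\infty$ and $0<\operatorname{Tr}(\sigma^\alpha)<\infty$ for $\alpha>0$. Hence $R_\alpha(\rho)$ and $\log\operatorname{Tr}(\sigma^\alpha)$ are finite real numbers. Moving the finite quantity $R_\alpha(\rho)$ from one side to the other is therefore valid.

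The only remaining concern is the case $S_\alpha(\rho\|\sigma)=+\infty$, where $\operatorname{supp}(\rho)\not\subseteq\operatorname{supp}(\sigma)$. Here I would read the identity with the same conventions ($\log 0=-\infty$, $\log(+\infty)=+\infty$) and support restrictions adopted after Definition~\ref{Defn:Divergence}. Because the two terms that depend only on $\rho$ or only on $\sigma$ are finite, any infinity can come only from the cross term $\frac{\alpha}{1-\alpha}\log\operatorname{Tr}(\rho\sigma^{\alpha-1})$, which appears identically on both sides. So both sides are infinite together.

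There is no genuine obstacle in this proof. The only care needed is this bookkeeping about finiteness. Optionally, using Lemma~\ref{Lemma:traceequation}, one can also write the result in eigenvalue form as \eqref{quantumjones2} with $-\frac{1}{1-\alpha}\log\sum_i p_i^\alpha=-R_\alpha(\rho)$.
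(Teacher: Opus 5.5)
Your proposal is correct and matches the paper, which states this lemma without proof because the middle term of \eqref{quantumjonesetal} is by definition $-R_\alpha(\rho)$. One caveat on your optional finiteness remark: for $\alpha>1$ the operator $\sigma^{\alpha-1}$ annihilates $\operatorname{Ker}(\sigma)$, so when $\operatorname{supp}(\rho)\not\subseteq\operatorname{supp}(\sigma)$ the cross term is generally finite, and the two sides agree only because the convention $S_\alpha(\rho\|\sigma):=+\infty$ is applied to the right-hand side as well; the cross term itself blows up in that case only for $\alpha<1$ (or when $\rho\sigma=0$).
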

    \begin{remark}
        \begin{itemize}
            \item [1.]  Let us define a generalized cross entropy $C_\alpha(\rho,\sigma)$ as
            \[
            C_\alpha(\rho,\sigma) = \frac{\alpha}{1 - \alpha}\log \operatorname{Tr} (\rho \sigma^{\alpha - 1}) + \log \operatorname{Tr} (\sigma ^\alpha).
            \]
        Observe that, when $\rho=\sigma$, then
        \[
        C_\alpha(\rho,\rho) = R_\alpha(\rho).
        \]
        Thus, we have $S_\alpha(\rho||\sigma) = C_\alpha(\rho,\sigma) - R_\alpha(\rho)$ with $C_\alpha(\rho,\rho) = R_\alpha(\rho)$.

        \item[2.]  In particular when $\sigma$ is the maximally mixed stated, that is, $q_j = 1/n$ for all $j,$ then $S_{\alpha}(\rho || \sigma) = \log(n) - R_\alpha(\rho).$ This results to the well-known property of R\'enyi entropy $R_\alpha(\rho)\leq \log(n)$

        \item[3.]  
      Observe that $R_\alpha(\rho) \to S(\rho)$ as $\alpha \to 1,$ where $S(\rho)$ is the von Neumann entropy \eqref{vonNeumann}.
         \end{itemize}
    \end{remark}

Despite its connections to R\'enyi divergences, $S_{\alpha}(\rho \| \sigma)$ fails to satisfy several structural properties enjoyed by other popular divergences in quantum information theory, most notably the class of quantum $f$-divergences \eqref{standard_f}. As noted earlier, $S_{\alpha}$ lacks joint convexity and does not exhibit monotonicity with respect to the parameter $\alpha$. Moreover, for a fixed state $\rho$, the divergence $S_{\alpha}(\rho \| \sigma)$ does not preserve ordering in its second argument.

In contrast, the sandwiched R\'enyi divergence $D_{\alpha}^{*}(\rho \| \sigma)$ as in \eqref{Sandwiched} satisfies this monotonicity property; specifically,
\begin{equation}
D_{\alpha}^{*}(\rho \| \sigma_0) \leq D_{\alpha}^{*}(\rho \| \sigma)
\quad \text{whenever } \sigma_0 \geq \sigma,
\end{equation}
as shown in \cite{MullerMartin2013}. Umegaki's relative entropy \eqref{Umegaki} also obeys this inequality. Figure.~\ref{fig2} illustrates the contrasting behavior of $S_{\alpha}(\rho \| \sigma)$ and $\hat{D}_{\alpha}(\rho \| \sigma)$ across several representative scenarios. Furthermore, $S_{\alpha}(\rho \| \sigma)$ does not, in general, satisfy the data-processing inequality. We support this claim with the examples that follow. 

\begin{example}
    We fix $\alpha = 0.5$. Let $\rho$ and $\sigma$ be two density matrices in the system $\mathcal{H}_A$ with basis elements : $\mathcal{B}_A = \{|0\rangle, |1\rangle \}$. 

Let $\rho = \begin{pmatrix}
    0.85 & 0 \\ 0 & 0.15
\end{pmatrix}, \quad \sigma = \begin{pmatrix}
    0.25 & 0 \\ 0 & 0.75
\end{pmatrix}$.

Then $S_{.5}(\rho || \sigma) \approx 1\log(1.873) - 2\log(1.309) + \log(1.366) \approx 0.5782$.

Let $\Phi_1$ be a quantum channel from the system $\mathcal{H}_A$ to $\mathcal{H}_B,$ where $\mathcal{B}_B = \{|0\rangle, |1\rangle\}$ is the basis for $\mathcal{H}_B$. 

We define the quantum channel $\Phi_1$ as $\Phi_1(\rho) = \sum_{i=1}^4 K_{i} \rho K_{i}^{*},$ where $K_{1} = \sqrt{0.6}|0_A\rangle \langle 0_B|, K_{2} = \sqrt{0.05}|0_A\rangle \langle 1_B|, K_{3} = \sqrt{0.4}|1_A\rangle \langle 0_B|$ and $K_{4} = \sqrt{0.95}|1_A\rangle \langle 1_B|$.

This $\Phi_1$ is a well-defined quantum channel as $\sum_{i=1}^4 K_{i}^{*} K_{i}  = \mathcal{I}_2,$ where $\mathcal{I}_2$ is the identity operator of order $2$. 

After applying the channel on the original density matrices, we get 

$\Phi_1(\rho) = \begin{pmatrix}
    0.5175 & 0 \\ 0 & 0.4825
\end{pmatrix}$ and $\Phi_1(\sigma) = \begin{pmatrix}
    0.1875 & 0 \\ 0 & 0.8125
\end{pmatrix}.$

And we have $S_{.5}(\Phi_1(\rho) || \Phi_1(\sigma)) \approx 1\log(1.730) - 2\log(1.414) + \log(1.334) \approx 0.20664. $

Therefore $S_{\alpha}(\Phi_1(\rho) || \Phi_1(\sigma)) < S_{\alpha}(\rho ||\sigma)$ in this case, implying no violation of the data processing inequality.
\end{example}
    
\begin{example}
    We fix $\alpha = 2.$ Let $\rho$ and $\sigma$ be two density matrices in the system $\mathcal{H}_A$ with basis elements : $\mathcal{B}_A = \{|e_1\rangle, |e_2\rangle, |e_3\rangle \},$ where $|e_1\rangle = \begin{pmatrix}
    1 \\ 0 \\0
\end{pmatrix}, |e_2\rangle = \begin{pmatrix}
    0 \\ 1 \\0
\end{pmatrix},$ and $|e_3\rangle = \begin{pmatrix}
    0 \\ 0 \\1
\end{pmatrix}.$

Let $\rho = \begin{pmatrix}
0.5 & 0 & 0 \\
0 & 0.25 & 0 \\
0 & 0 & 0.25
\end{pmatrix}
$ and $\sigma = \begin{pmatrix}
0.7 & 0 & 0 \\
0 & 0.2 & 0 \\
0 & 0 & 0.1
\end{pmatrix}
.$ 

Then $S_{2}(\rho \| \sigma) = (-2) \log(0.425) + \log(0.375) + \log(0.54) \approx 0.1649.$

Let $\Phi_2$ be a quantum channel from the system $\mathcal{H}_A$ to $\mathcal{H}_B,$ where $\mathcal{B}_B = \{|0\rangle, |1\rangle\}$ is the basis for $\mathcal{H}_B$. 

We define the quantum channel $\Phi_2$ as $\Phi_2(\rho) = \sum_{i=1}^3 K_{i} \rho K_{i}^{*}$, where $K_{1} = |0\rangle \langle e_1|, K_{2} = |1\rangle \langle e_2|$ and $K_{3} = |1\rangle \langle e_3|$.

This $\Phi_2$ is a well-defined quantum channel as $\sum_{i=1}^3 K_{i}^{*} K_{i} = \mathcal{I}_3,$ where $\mathcal{I}_3$ is the identity operator of order $3$. 

After applying the channel on the original density matrices, we get 

$\Phi_2(\rho) = \begin{pmatrix}
    0.5 & 0 \\ 0 & 0.5
\end{pmatrix}$ and $\Phi_2(\sigma) = \begin{pmatrix}
    0.7 & 0 \\ 0 & 0.3
\end{pmatrix}.$

Therefore $S_{2}(\Phi_2(\rho) || \Phi_2(\sigma)) = (-2) \log(0.50) + \log(0.50) + \log(0.58) \approx 0.21412.$

Very clearly $S_{\alpha}(\Phi_2(\rho) || \Phi_2(\sigma)) > S_{\alpha}(\rho ||\sigma)$ in this case.

\end{example}

\begin{figure}
\centering
\includegraphics[width=0.9\textwidth]{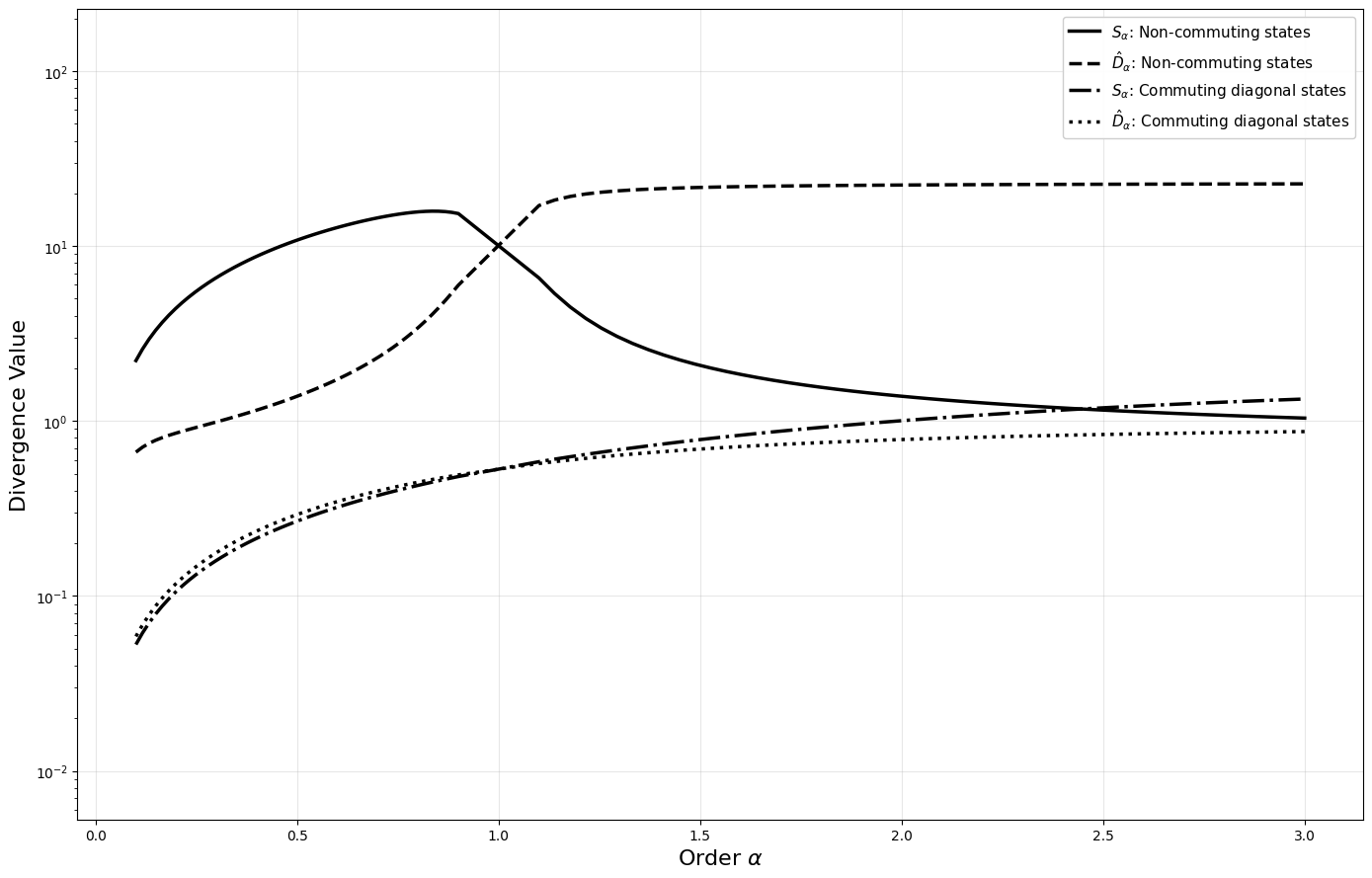}
\caption{The Quantum Relative $\alpha$-Entropy vs Petz-R\'enyi-$\alpha$-Relative Entropy as functions of the order $\alpha$.}
\label{fig2}
\end{figure}

We conclude this section with two more instances of special cases. First, we check the limiting behavior of $S_{\alpha}(\rho || \sigma)$ as $\alpha \to 0.$ Observe that 
\begin{eqnarray}
\label{limitatzero}
      \lim_{\alpha \to 0}  S_\alpha(\rho||\sigma)  =  0 - \log \operatorname{Tr}(\Pi_\rho) + \log \operatorname{Tr}(\Pi_\sigma) = \log \frac{\operatorname{Tr}(\Pi_\sigma)}{\operatorname{Tr}(\Pi_\rho)},
\end{eqnarray}
where $\Pi_\rho$ is the projection of the density matrix $\rho$ onto supp($\rho$). Datta in \cite{Datta2009} defined the min-relative entropy, $D_{min}(\rho||\sigma)$ of two density matrices $\rho$ and $\sigma$ as 
\begin{equation}
    D_{min}(\rho||\sigma) = - \log \operatorname{Tr}(\Pi_{\rho} \sigma). \nonumber
\end{equation}
And it is verified that 
\begin{equation}
    D_{min}(\rho||\sigma) = \lim_{\alpha \to 0}  \Hat{D}_\alpha(\rho||\sigma), \nonumber
\end{equation}
where $\Hat{D}_\alpha(\rho||\sigma)$ is as defined in \eqref{RenyiRelativeentropy}. But as shown in \eqref{limitatzero}, the limiting behavior of $S_{\alpha}(\rho || \sigma)$ at $\alpha = 0$ is quite different from $\Hat{D}_\alpha(\rho||\sigma)$, as no global comparison between $D_{min}(\rho||\sigma)$ and $\log \frac{\operatorname{Tr}(\Pi_\sigma)}{\operatorname{Tr}(\Pi_\rho)}$ can be obtained. However, in a specific setting, equality can be drawn. For instance, when $\sigma$ is maximally mixed,   
\begin{eqnarray}
      - \log \operatorname{Tr}(\Pi_{\rho}\sigma)  =  - \log \operatorname{Tr}\Big(\frac{\Pi_\rho}{n}\Big) = - \log \frac{\operatorname{Tr}(\Pi_\rho)}{n} = - \log \frac{\operatorname{Tr}(\Pi_\rho)}{\operatorname{Tr}(\Pi_\sigma)}. \nonumber
\end{eqnarray}
The converse of the above is also true, as stated in the following result. 

\begin{lemma}
\label{Lemma:minrelativerelation}
    $\lim\limits_{\alpha \to 0}  S_\alpha(\rho||\sigma) = D_{min}(\rho||\sigma)$ if and only if $\sigma$ is maximally mixed.
\end{lemma}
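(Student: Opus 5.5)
The plan is to work with the explicit form of the limit in \eqref{limitatzero}. As $\alpha\to 0$ we have $\operatorname{Tr}(\rho\sigma^{\alpha-1})\to\operatorname{Tr}(\rho\,\sigma^{-1})$, which is finite on the support, so its prefactor $\frac{\alpha}{1-\alpha}$ kills the first term. Also $\operatorname{Tr}(\rho^\alpha)\to\operatorname{Tr}(\Pi_\rho)$ and $\operatorname{Tr}(\sigma^\alpha)\to\operatorname{Tr}(\Pi_\sigma)$. The claimed equality is therefore equivalent to
\[
\operatorname{Tr}(\Pi_\rho\sigma)=\frac{\operatorname{Tr}(\Pi_\rho)}{\operatorname{Tr}(\Pi_\sigma)},
\]
and both directions will be argued from this identity. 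Throughout I assume $\operatorname{supp}(\rho)\subseteq\operatorname{supp}(\sigma)$, since otherwise $S_\alpha=+\infty$ by Definition~\ref{Defn:Divergence}.

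The ``if'' direction is the computation displayed just before the statement. If $\sigma=I/n$, then $\Pi_\sigma=I$ and $\operatorname{Tr}(\Pi_\rho\sigma)=\operatorname{Tr}(\Pi_\rho)/n$, which is exactly the identity above.

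For the ``only if'' direction I expect the main obstacle: the statement does not hold as literally written for a single fixed pair $(\rho,\sigma)$. Take $\rho$ of full rank. Then $\Pi_\rho=I$, so $\operatorname{Tr}(\Pi_\rho\sigma)=1$ and $\operatorname{Tr}(\Pi_\rho)=n$. The identity then reduces to $\operatorname{Tr}(\Pi_\sigma)=n$, which holds for \emph{every} full-rank $\sigma$, not only $I/n$. So I would prove the converse in the form the remark suggests: if the equality holds for every $\rho$ with $\operatorname{supp}(\rho)\subseteq\operatorname{supp}(\sigma)$, then $\sigma$ is maximally mixed on its support, and maximally mixed outright if $\sigma$ is full rank.

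The argument for this converse uses rank-one $\rho=|x\rangle\langle x|$ with $x$ a unit vector in $\operatorname{supp}(\sigma)$. For such $\rho$ the identity reads $\langle x|\sigma|x\rangle=1/r$, where $r=\operatorname{Tr}(\Pi_\sigma)=\operatorname{rank}(\sigma)$. Hence the quadratic form of $\sigma$ restricted to its support is constant on unit vectors. A Hermitian operator whose quadratic form is constant on the unit sphere of a subspace is a scalar there, so $\sigma=\Pi_\sigma/r$. Adding full support ($r=n$) gives $\sigma=I/n$.

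In the write-up I would state the counterexample explicitly, for instance $\rho=I/2$ and $\sigma=\mathrm{diag}(3/4,1/4)$. Both sides then equal $0$, yet $\sigma$ is not maximally mixed. I would then present the corrected ``for all $\rho$'' version as the converse that can actually be proved.
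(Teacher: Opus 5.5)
Your proposal is correct, and its key step is the paper's own. The paper also tests the identity $\operatorname{Tr}(\Pi_\rho\sigma)=\operatorname{Tr}(\Pi_\rho)/\operatorname{Tr}(\Pi_\sigma)$ on pure states $|x\rangle\langle x|$, concludes $\langle x|\sigma|x\rangle=1/k$ for all unit $x$, and reads off the eigenvalues.

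The difference is that the paper makes your quantifier shift silently. It starts from the equality for one given $\rho$ and then applies it to an arbitrary pure $\hat\rho$. So its argument only establishes the ``for all $\rho$'' version you state. Your counterexample is right: with $\rho=I/2$ and $\sigma=\mathrm{diag}(3/4,1/4)$ both sides equal $0$, and the same holds for any pair with $\rho$ of full rank. So the lemma is false as literally stated for a fixed pair, and your reformulation is the statement the paper's argument actually proves.

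The paper also applies the identity to every unit vector of $\mathcal{H}$. This ignores that the left side is $+\infty$ by Definition~\ref{Defn:Divergence} when $x\notin\operatorname{supp}(\sigma)$. Your restriction to $x\in\operatorname{supp}(\sigma)$ is the careful version, and the conclusion $\sigma=\Pi_\sigma/r$ is sharp under that hypothesis; $\sigma=\mathrm{diag}(1/2,1/2,0)$ satisfies the identity for every $\rho$ supported in $\operatorname{supp}(\sigma)$.

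If you let the hypothesis range over all states instead, you can drop the full-rank proviso. The $+\infty$ convention then forces full rank by itself. Take unit vectors $u\in\operatorname{supp}(\sigma)$ and $v\in\ker(\sigma)$, and the pure state $|x\rangle\langle x|$ with $x=(u+v)/\sqrt{2}$. The left side is $+\infty$, but $D_{min}=-\log(\langle u|\sigma|u\rangle/2)$ is finite. Hence equality for all states yields $\sigma=I/n$ directly.
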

\begin{IEEEproof}
    For a maximally mixed state $\sigma,$ the statement is already shown to be true. We now prove the converse part of the lemma.

    Let $\lim\limits_{\alpha \to 0}  S_\alpha(\rho||\sigma) = D_{min}(\rho||\sigma),$ i.e., $\frac{\operatorname{Tr}(\Pi_\rho)}{\operatorname{Tr}(\Pi_\sigma)} = \operatorname{Tr}(\Pi_{\rho}\sigma).$ 

     Let $\operatorname{Tr}(\Pi_\sigma) = k$ for some real constant $k$.

Then $\operatorname{Tr}(\Pi_{\rho}\sigma) = \frac{1}{k} \operatorname{Tr}(\Pi_\rho)$.

For any pure density matrix $\hat{\rho}, \Pi_{\hat{\rho}} = |x \rangle \langle x|,$ for some orthonormal vector $|x \rangle \in \mathcal{H}.$

So, $\operatorname{Tr}(\Pi_{\hat{\rho}}) = 1$ and $\operatorname{Tr}(\Pi_{\hat{\rho}}\sigma) = \langle x|\sigma|x\rangle = \frac{1}{k}.$

This implies that $ \langle x|\sigma|x\rangle = \frac{1}{k} \quad \forall x$ with $||x||=1.$

Now let $\{|y_j \rangle\}_{i=1}^n$ be an orthonormal eigen-basis of $\sigma$.

Then the eigen values of $\sigma := q_j = \langle y_j|\sigma|y_j\rangle = \frac{1}{k} \quad \forall j$.

As $\sum_{j=1}^n q_j = 1$, we have $n \frac{1}{k} = 1 \implies q_j = \frac{1}{n} \quad \forall j$, where $n = dim(\mathcal{H})$.

This completes the proof.
\end{IEEEproof}

Lastly, for $\alpha = 2,$ we relate the quantum relative $\alpha$-entropy with the fidelity of two quantum states when the states are commutative. Using the definition \eqref{quantumjonesetal}, we get
\begin{eqnarray}
    S_2(\rho||\sigma)
    & = & -2 \log \operatorname{Tr} (\rho \sigma) + \log \operatorname{Tr} (\rho^2) + \log \operatorname{Tr} (\sigma^2) \nonumber \\
    & = & \log \Big[\frac{\operatorname{Tr}(\rho^2) \operatorname{Tr}(\sigma^2)}{\{\operatorname{Tr}(\rho \sigma)\}^2} \Big] \nonumber \\ 
    & = & \log \Big[\frac{\operatorname{Tr}(\rho^2) \operatorname{Tr}(\sigma^2)}{\{F(\rho, \sigma)\}^2} \Big], \nonumber 
\end{eqnarray}
where $F(\rho, \sigma) = \operatorname{Tr} [(\rho^{1/2} \sigma \rho^{1/2})^{1/2}]$ is defined as fidelity \cite{Jozsa94, Nielsen00, Datta15}.


\section{Nussbaum--Szko\l{}a Distributions and Quantum Relative $\alpha$-Entropy}
\label{5sec:NZDistributions}

To establish an exact correspondence between the classical relative $\alpha$-entropy \eqref{Jones} and its quantum analogue \eqref{quantumjonesetal}, we employ the Nussbaum-Szko\l{}a (NZ) distributions associated with a pair of quantum states. This construction enables us to represent the quantum relative $\alpha$-entropy $S_{\alpha}(\rho \| \sigma)$ as a classical relative $\alpha$-entropy evaluated on suitably defined probability measures.

Let $\rho$ and $\sigma$ be density matrices with spectral decompositions as in \eqref{decompositions}. The Nussbaum-Szko\l{}a distributions $P$ and $Q$ associated with $\rho$ and $\sigma$, respectively, are defined by
\begin{equation}
\label{distributioneqn}
P(i,j) = p_i |\langle x_i | y_j \rangle|^2,
\qquad
Q(i,j) = q_j |\langle x_i | y_j \rangle|^2.
\end{equation}

\begin{remark}
Since $\{y_j\}$ forms an orthonormal basis, $\sum_{j=1}^n |\langle x_i | y_j \rangle|^2 = 1$ for each $i$. Consequently,
\[
\sum_{i,j=1}^n P(i,j)
=
\sum_{i=1}^n p_i
=
1,
\]
and $P(i,j) \ge 0$ for all $i,j$. Hence $P$ is a probability distribution. The same argument applies to $Q$.
\end{remark}

For $\alpha > 0$, define the associated $\alpha$-escort NZ distributions
\begin{equation}
\label{distributioneqnmodi}
P^{(\alpha)}(i,j)
=
\frac{p_i^{\alpha}}{\operatorname{Tr}(\rho^{\alpha})}
|\langle x_i | y_j \rangle|^2,
\qquad
Q^{(\alpha)}(i,j)
=
\frac{q_j^{\alpha}}{\operatorname{Tr}(\sigma^{\alpha})}
|\langle x_i | y_j \rangle|^2.
\end{equation}

\begin{lemma}
\label{Lemma3}
For $\alpha > 0$, the classical $f$-divergence between the escort NZ distributions satisfies
\begin{equation}
\label{fdivNZ}
D_f\!\left(P^{(\alpha)} \| Q^{(\alpha)}\right)
=
\sum_{\substack{i,j:\\ \langle x_i | y_j \rangle \neq 0}}
f\!\left(\frac{p_i^{(\alpha)}}{q_j^{(\alpha)}}\right)
q_j^{(\alpha)}
|\langle x_i | y_j \rangle|^2,
\end{equation}
where
\[
p_i^{(\alpha)} = \frac{p_i^{\alpha}}{\operatorname{Tr}(\rho^{\alpha})},
\qquad
q_j^{(\alpha)} = \frac{q_j^{\alpha}}{\operatorname{Tr}(\sigma^{\alpha})}.
\]
\end{lemma}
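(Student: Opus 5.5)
The plan is to compute directly from the definition of the classical Csisz\'ar $f$-divergence \eqref{f-divergence}, applied to the pair of distributions $P^{(\alpha)}$, $Q^{(\alpha)}$ on the finite index set $\{1,\dots,n\}^2$. First, as in the Remark following \eqref{distributioneqn}, I will check that these are genuine probability distributions. Since $\{y_j\}$ is an orthonormal basis, $\sum_j |\langle x_i|y_j\rangle|^2=1$, so
\[
\sum_{i,j} P^{(\alpha)}(i,j)=\sum_i p_i^\alpha/\operatorname{Tr}(\rho^\alpha)=1 .
\]
Symmetrically, $\sum_i |\langle x_i|y_j\rangle|^2=1$ because $\{x_i\}$ is an orthonormal basis, and this gives $\sum_{i,j}Q^{(\alpha)}(i,j)=1$. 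Nonnegativity is clear. Here $\operatorname{Tr}(\rho^\alpha)=\sum_i p_i^\alpha$ by the spectral decomposition, as noted after Lemma~\ref{Lemma:traceequation}.

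Next, I will write out
\[
D_f(P^{(\alpha)}\|Q^{(\alpha)})=\sum_{i,j} Q^{(\alpha)}(i,j)\, f\!\left(\frac{P^{(\alpha)}(i,j)}{Q^{(\alpha)}(i,j)}\right).
\]
The key observation is that both escort NZ distributions carry the same overlap factor $|\langle x_i|y_j\rangle|^2$. On indices with $\langle x_i|y_j\rangle\neq 0$, this factor cancels in the ratio, leaving $P^{(\alpha)}(i,j)/Q^{(\alpha)}(i,j)=p_i^{(\alpha)}/q_j^{(\alpha)}$. I will then substitute $Q^{(\alpha)}(i,j)=q_j^{(\alpha)}|\langle x_i|y_j\rangle|^2$ as the prefactor, which yields exactly the summand in \eqref{fdivNZ}.

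The only delicate step is the treatment of indices where $\langle x_i|y_j\rangle=0$, together with the degenerate cases $q_j=0$ or $p_i=0$. When the overlap vanishes, both $P^{(\alpha)}(i,j)$ and $Q^{(\alpha)}(i,j)$ are zero. Under the standard convention for $f$-divergences on finite sets (the common support $\mathbb S$ in \eqref{f-divergence}, or equivalently $0\cdot f(0/0)=0$), such pairs contribute nothing, so they can be dropped from the sum. This is why the summation in \eqref{fdivNZ} is restricted to pairs with $\langle x_i|y_j\rangle\neq0$.

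For the remaining indices with $q_j=0$ but nonzero overlap, I will use the convention $0\cdot(\pm\infty)=0$ and the usual extension $0\, f(a/0)=a\lim_{t\to\infty}f(t)/t$. The same extended value then appears on both sides of \eqref{fdivNZ}, so the identity holds as an equality in the extended reals. If $\operatorname{supp}(\rho)\subseteq\operatorname{supp}(\sigma)$, Lemma~\ref{Lemma:supportcharacterization} shows that this case cannot occur with $p_i>0$, so all terms are finite. This bookkeeping is the main obstacle; the rest is direct substitution.
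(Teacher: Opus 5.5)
Your proposal is correct and follows the paper's route: the paper simply substitutes \eqref{distributioneqnmodi} into \eqref{f-divergence}, and your cancellation of the common overlap factor $|\langle x_i|y_j\rangle|^2$ is that substitution written out. Your bookkeeping for vanishing overlaps and for $p_i=0$ or $q_j=0$ is care the paper skips, since its definition \eqref{f-divergence} presumes a common support that $P^{(\alpha)}$ and $Q^{(\alpha)}$ need not share. One small correction: your closing claim that all terms are finite also needs $f(0^+)<\infty$ (this holds for the $f$ of Remark~\ref{Remark:fconnection}), though the identity itself does not depend on it.
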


The expression in \eqref{fdivNZ} follows directly by substituting \eqref{distributioneqnmodi} into the classical definition \eqref{f-divergence}; see also \cite{Androulakis2024} for related formulations.

\begin{remark}
\label{Remark:fconnection}
Let $f(x) = \mathrm{sgn}\!\left(\frac{1-\alpha}{\alpha}\right)(x^{1/\alpha} - 1)$ for $\alpha > 0$. Then
\[
f\!\left(\frac{p_i^{(\alpha)}}{q_j^{(\alpha)}}\right)
=
\mathrm{sgn}\!\left(\frac{1-\alpha}{\alpha}\right)
\left[
\left(\frac{p_i}{q_j}\right)
\left(
\frac{\operatorname{Tr} \sigma^{\alpha}}
{\operatorname{Tr} \rho^{\alpha}}
\right)^{1/\alpha}
- 1
\right].
\]
Substituting into \eqref{fdivNZ} yields an explicit expression for
$D_f(P^{(\alpha)} \| Q^{(\alpha)})$ in terms of the eigenvalues of $\rho$ and $\sigma$.
\end{remark}

Following \cite{KumarS15J1}, the classical relative $\alpha$-entropy is defined by
\begin{equation}
\label{jonesin2015}
J_{\alpha}(P \| Q)
=
\frac{\alpha}{1 - \alpha}
\log
\left[
\mathrm{sgn}\!\left(\frac{1-\alpha}{\alpha}\right)
D_f\!\left(P^{(\alpha)} \| Q^{(\alpha)}\right)
+ 1
\right],
\end{equation}
where $P^{(\alpha)}$ and $Q^{(\alpha)}$ denote the corresponding $\alpha$-escort measures.

Substituting the expression obtained in Remark~\ref{Remark:fconnection} into \eqref{jonesin2015} yields
\[
J_{\alpha}(P \| Q)
=
\frac{\alpha}{1 - \alpha}
\log
\left[
\sum_{\substack{i,j:\\ \langle x_i | y_j \rangle \neq 0}}
p_i q_j^{\alpha-1}
(\operatorname{Tr} \rho^{\alpha})^{-1/\alpha}
(\operatorname{Tr} \sigma^{\alpha})^{(1-\alpha)/\alpha}
|\langle x_i | y_j \rangle|^2
\right],
\]
which coincides with \eqref{quantumjones2}. We thus obtain the following result.

\begin{theorem}
\label{1:Theorem7}
Let $\rho$ and $\sigma$ be density matrices with spectral decompositions as in \eqref{decompositions}, and let $P$ and $Q$ denote their associated NZ-distributions. Then
\[
S_{\alpha}(\rho \| \sigma)
=
J_{\alpha}(P \| Q),
\]
where $S_{\alpha}(\rho \| \sigma)$ denotes the quantum relative $\alpha$-entropy and $J_{\alpha}(P \| Q)$ the classical relative $\alpha$-entropy.
\end{theorem}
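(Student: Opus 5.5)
The plan is to compute $J_\alpha(P\|Q)$ directly from the defining formula \eqref{Jones} for the NZ-distributions \eqref{distributioneqn}, and compare term by term with the representation \eqref{quantumjones2} of $S_\alpha(\rho\|\sigma)$ obtained from Lemma~\ref{Lemma:traceequation}. This avoids the detour through the escort $f$-divergence \eqref{jonesin2015}, although that route gives the same answer. Throughout, write $M_{ij}=|\langle x_i|y_j\rangle|^2$. The only structural fact needed is that $M$ is doubly stochastic, i.e. $\sum_j M_{ij}=\sum_i M_{ij}=1$; this holds because both $\{x_i\}$ and $\{y_j\}$ are orthonormal bases.

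The computation has three pieces, all summed over the common support $\{(i,j): M_{ij}\neq 0\}$.

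\emph{Cross term.} We have $P(i,j)\,Q(i,j)^{\alpha-1}=p_i q_j^{\alpha-1} M_{ij}^{\alpha}$.

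\emph{Power sums.} We have $\sum_{i,j}P(i,j)^\alpha=\sum_{i,j}p_i^\alpha M_{ij}^\alpha$, and similarly $\sum_{i,j}Q(i,j)^\alpha=\sum_{i,j}q_j^\alpha M_{ij}^\alpha$.

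Compare these with the quantum quantities. Lemma~\ref{Lemma:traceequation} gives $\operatorname{Tr}(\rho\sigma^{\alpha-1})=\sum p_i q_j^{\alpha-1}M_{ij}$, and the spectral theorem gives $\operatorname{Tr}\rho^\alpha=\sum_i p_i^\alpha=\sum_{i,j}p_i^\alpha M_{ij}$. So the two sides agree exactly when every $M_{ij}$ that appears carries exponent one rather than $\alpha$.

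This is the step I expect to be the main obstacle. Plugging $P,Q$ literally into \eqref{Jones} produces the factors $M_{ij}^\alpha$, and these do not sum to one unless $\rho$ and $\sigma$ commute (i.e. $M$ is a permutation matrix). The identity therefore cannot hold for the naive $J_\alpha$ of the raw NZ pair in general. The resolution I would adopt follows the paper's setup: interpret $J_\alpha(P\|Q)$ through the escort formulation \eqref{jonesin2015} of \cite{KumarS15J1}, evaluated on the escort NZ-distributions \eqref{distributioneqnmodi}. There the $\alpha$-power is applied to the eigenvalues and not to the overlap weights, so $M_{ij}$ enters linearly.

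Concretely, I would use Lemma~\ref{Lemma3} with $f$ as in Remark~\ref{Remark:fconnection} and expand $D_f(P^{(\alpha)}\|Q^{(\alpha)})$. Its $-1$ part gives $-\sum_{i,j}q_j^{(\alpha)}M_{ij}=-1$ by double stochasticity. Its main part gives
\[
\sum_{i,j}p_i q_j^{\alpha-1}M_{ij}\,(\operatorname{Tr}\rho^\alpha)^{-1/\alpha}(\operatorname{Tr}\sigma^\alpha)^{(1-\alpha)/\alpha}.
\]
The sign factors cancel, so the bracket in \eqref{jonesin2015} equals exactly this sum.

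To finish, I would take $\frac{\alpha}{1-\alpha}\log$ of that bracket and split the logarithm into three terms. This recovers \eqref{quantumjones2}, equivalently \eqref{1:quantumjones1}. Terms with $M_{ij}=0$ contribute nothing on either side. When $\operatorname{supp}\rho\not\subseteq\operatorname{supp}\sigma$, Lemma~\ref{Lemma:supportcharacterization} shows that some $p_i>0$, $q_j=0$ has $M_{ij}>0$, and then both sides are $+\infty$ under the stated conventions. That handles the degenerate case and completes the plan.
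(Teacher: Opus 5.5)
Your computation is the paper's proof. You evaluate \eqref{fdivNZ} on the pair \eqref{distributioneqnmodi} with the $f$ of Remark~\ref{Remark:fconnection}, dispose of the $-1$ part via $\sum_{i,j}q_j^{(\alpha)}M_{ij}=1$ (stated more carefully than in the paper), and split the logarithm.

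Your caveat is correct, and it is a defect of the theorem as worded, not of your plan. The identity \eqref{jonesin2015} of \cite{KumarS15J1} reproduces \eqref{Jones} only when evaluated at the genuine escorts of $P$ and $Q$. These are proportional to $p_i^\alpha M_{ij}^\alpha$ and $q_j^\alpha M_{ij}^\alpha$, whereas \eqref{distributioneqnmodi} carries $M_{ij}$ to the first power. The paper's proof makes this substitution without comment.

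Your term-by-term exponent comparison does not by itself settle non-equality: when all $M_{ij}=1/n$ the stray powers cancel and equality holds. A direct check does settle it. Take $\alpha=2$, $\rho=|0\rangle\langle 0|$, and $\sigma$ with eigenvalues $(3/4,1/4)$ and overlaps $M_{11}=M_{22}=3/4$, $M_{12}=M_{21}=1/4$. Then $S_2(\rho\|\sigma)=\log(8/5)$, while $J_2(P\|Q)=\log(125/98)$.

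What you and the paper actually prove is $S_\alpha(\rho\|\sigma)=J_\alpha(\hat P\|\hat Q)$, where $\hat P(i,j)\propto p_iM_{ij}^{1/\alpha}$ and $\hat Q(i,j)\propto q_jM_{ij}^{1/\alpha}$. This is the pair whose genuine escorts are exactly \eqref{distributioneqnmodi}. Equivalently, it is the R\'enyi divergence of order $1/\alpha$ between the NZ distributions of $\rho^{(\alpha)}$ and $\sigma^{(\alpha)}$.

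One further slip concerns your degenerate case. It works for $\alpha<1$, where $q_j^{\alpha-1}=+\infty$, but not for $\alpha>1$. There $q_j^{\alpha-1}=0$ deletes the offending terms, so the classical side is finite whenever $\operatorname{Tr}(\rho\sigma^{\alpha-1})>0$, while $S_\alpha=+\infty$ by Definition~\ref{Defn:Divergence}. The identity should therefore be stated under $\operatorname{supp}\rho\subseteq\operatorname{supp}\sigma$, which the paper tacitly assumes.
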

\begin{IEEEproof}
    From \eqref{fdivNZ}, using the function from Remark~\ref{Remark:fconnection}, we have
    \begin{align*}
        & D_f\!\left(P^{(\alpha)} \| Q^{(\alpha)}\right) \\
        & =  \sum_{\substack{i,j:\\ \langle x_i | y_j \rangle \neq 0}} \mathrm{sgn}\!\left(\frac{1-\alpha}{\alpha}\right) \left[ \left(\frac{p_i}{q_j}\right) \left(\frac{\operatorname{Tr} \sigma^{\alpha}} {\operatorname{Tr} \rho^{\alpha}} \right)^{1/\alpha} - 1 \right] q_j^{(\alpha)} |\langle x_i | y_j \rangle|^2 \nonumber \\
        & = \mathrm{sgn}\!\left(\frac{1-\alpha}{\alpha}\right) \left[ \sum_{\substack{i,j:\\ \langle x_i | y_j \rangle \neq 0}} \left(\frac{p_i}{q_j}\right) \left(\frac{\operatorname{Tr} \sigma^{\alpha}} {\operatorname{Tr} \rho^{\alpha}} \right)^{1/\alpha} \left( \frac{q_j^{\alpha}}{\operatorname{Tr} \sigma^{\alpha}}\right) |\langle x_i | y_j \rangle|^2 - \sum_{\substack{i,j:\\ \langle x_i | y_j \rangle \neq 0}} \frac{q_j^{\alpha}}{\operatorname{Tr} \sigma^{\alpha}} |\langle x_i | y_j \rangle|^2 \right] \\ \nonumber
        & = \mathrm{sgn}\!\left(\frac{1-\alpha}{\alpha}\right) \left[\sum_{\substack{i,j:\\ \langle x_i | y_j \rangle \neq 0}} p_i q_j^{\alpha-1} \left(\operatorname{Tr} \sigma^{\alpha}\right)^{1/\alpha-1} \left(\operatorname{Tr} \rho^{\alpha}\right)^{-1/\alpha} |\langle x_i | y_j \rangle|^2 - \sum_{\substack{j:\\ \langle x_i | y_j \rangle \neq 0}} \frac{q_j^{\alpha}}{\operatorname{Tr} \sigma^{\alpha}} \right] ,\nonumber
\end{align*}
with
\begin{equation*}
    \sum_{\substack{j:\\ \langle x_i | y_j \rangle \neq 0}} \frac{q_j^{\alpha}}{\operatorname{Tr} \sigma^{\alpha}} = 1.
\end{equation*}
This implies 
\begin{equation*}
      \mathrm{sgn}\!\left(\frac{1-\alpha}{\alpha}\right) D_f\!\left(P^{(\alpha)} \| Q^{(\alpha)}\right) + 1 = \sum_{\substack{i,j:\\ \langle x_i | y_j \rangle \neq 0}} p_i q_j^{\alpha-1} \left(\operatorname{Tr} \sigma^{\alpha}\right)^{1/\alpha-1} \left(\operatorname{Tr} \rho^{\alpha}\right)^{-1/\alpha} |\langle x_i | y_j \rangle|^2.
\end{equation*}

And finally, following \eqref{jonesin2015} we have,
\begin{eqnarray}
    J_{\alpha}(P \| Q)
    & = &\frac{\alpha}{1 - \alpha} \log \sum_{\substack{i,j:\\ \langle x_i | y_j \rangle \neq 0}} \left[p_i q_j^{\alpha-1} \left(\operatorname{Tr} \sigma^{\alpha}\right)^{1/\alpha-1} \left(\operatorname{Tr} \rho^{\alpha}\right)^{-1/\alpha} |\langle x_i | y_j \rangle|^2 \right] \nonumber \\
    & = &\frac{\alpha}{1 - \alpha} \log \sum_{\substack{i,j:\\ \langle x_i | y_j \rangle \neq 0}} p_i q_j^{\alpha-1} |\langle x_i | y_j \rangle|^2 - \frac{1}{1 - \alpha} \log \sum_{i=1}^n p_i^{\alpha} + \log \sum_{j=1}^n q_j^{\alpha}, \nonumber
\end{eqnarray}
which is equivalent to the expression \eqref{quantumjones2} of the quantum relative $\alpha$-entropy.

This completes the proof.  
\end{IEEEproof}

\begin{remark}
Lemma~\ref{Lemma:positivity} also follows directly from Theorem~\ref{1:Theorem7} together with \cite[Lemma~2]{KumarS15J1}.
\end{remark}
\section{Quantum Density Power Divergence}
\label{6sec:QuantumDPD}

In this section, we introduce another divergence measure between two density matrices, termed the \emph{Quantum Density Power Divergence}. The motivation for considering this alternative construction stems from the broader observation that different generalizations of the KL divergence capture different structural aspects of quantum distinguishability. While several extensions preserve properties such as data-processing inequality or joint convexity, others arise naturally from statistical considerations. The divergence introduced below is motivated by the latter perspective and is inspired by the classical density power divergence \cite{BasuHHJ98J}.

\begin{definition}
\label{Defn:QDPD}
Let $\alpha > 0$ with $\alpha \neq 1$.  
For two density matrices $\rho$ and $\sigma$, the Quantum Density Power Divergence is defined as
\begin{eqnarray}
\label{QDPD}
\overline{S}_{\alpha}(\rho \| \sigma)
& = &
\frac{\alpha}{1 - \alpha} \operatorname{Tr} (\rho \sigma^{\alpha - 1})
- \frac{1}{1 - \alpha} \operatorname{Tr} (\rho^\alpha)
+ \operatorname{Tr} (\sigma^\alpha),
\end{eqnarray}
whenever $\mathrm{supp}[\rho] \subseteq \mathrm{supp}[\sigma]$. Otherwise,
$\overline{S}_{\alpha}(\rho \| \sigma) = +\infty$.
\end{definition}

Assuming that $\rho$ and $\sigma$ admit the spectral decompositions in \eqref{decompositions}, the divergence in \eqref{QDPD} can be written as
\begin{eqnarray}
\label{QDPD2}
\overline{S}_{\alpha}(\rho \| \sigma)
& = &
\frac{\alpha}{1 - \alpha}
\sum_{i,j=1}^n p_i q_j^{\alpha - 1}
|\langle x_i | y_j \rangle|^2
- \frac{1}{1 - \alpha} \sum_{i=1}^n p_i^{\alpha}
+ \sum_{j=1}^n q_j^{\alpha}.
\nonumber
\end{eqnarray}

The definition is well posed for positive semi-definite complex matrices. The divergence satisfies non-negativity,
\[
\overline{S}_{\alpha}(\rho \| \sigma) \ge 0,
\]
with equality if and only if $\rho = \sigma$. It is invariant under unitary conjugation, but it is not additive under tensor products. Similar to the quantum relative $\alpha$-entropy $S_{\alpha}(\rho \| \sigma)$, it fails to be jointly convex, although it remains convex in the first argument for all $\alpha>0$. In general, it does not satisfy the data-processing inequality. Unlike $S_{\alpha}(\rho \| \sigma)$, however, it is not invariant under positive scalar multiplication (see Lemma~\ref{Lemma:Constants}). Its structural connections with standard quantum information measures are therefore more limited.

The definition in \eqref{QDPD} is motivated by the classical density-power divergence
\[
\mathcal{B}_{\alpha}(p \| q)
=
\frac{\alpha}{1 - \alpha} \sum_{i,j=1}^n p_i q_j^{\alpha - 1}
- \frac{1}{1 - \alpha} \sum_{i=1}^n p_i^{\alpha}
+ \sum_{j=1}^n q_j^{\alpha},
\]
where $p=(p_i)_{i=1}^n$ and $q=(q_i)_{i=1}^n$ are two probability distributions. 
This lies outside the class of classical $f$-divergences \eqref{f-divergence}, for probability distributions $p$ and $q$ with common support. This, however, belongs to a bigger class called Bregman divergences whose projections are uniquely characterized by transitive rules \cite[Ex. 3]{Kanamori14}.

As $\alpha \to 1$, one recovers
\[
\overline{S}_{\alpha}(\rho \| \sigma) \to U(\rho \| \sigma),
\qquad
\mathcal{B}_{\alpha}(p \| q) \to \mathrm{KL}(p \| q),
\]
where $\mathrm{KL}(p \| q)$ denotes the KL divergence defined in \eqref{KL}. In the classical literature, this generalization of the KL divergence is known as the density power divergence \cite{Basu98, Kanamori14, Naudts04}. It has been studied extensively in the context of robust parameter estimation \cite{Basu98, JonesHHB01J}, as well as in hypothesis testing \cite{Basu13} and regression and multivariate modeling \cite{Riani20}.


\section{Summary and Conclusion}
\label{7sec:summary}
In this work, we introduced a new class of quantum divergences, termed the \emph{quantum relative $\alpha$-entropy} $S_{\alpha}(\rho \| \sigma)$. This generalizes Umegaki’s relative entropy while lying strictly outside the class of standard quantum $f$-divergences. We identified precise support conditions on density operators that ensure finiteness of $S_{\alpha}(\rho \| \sigma)$ and established a collection of structural properties that are essential for a meaningful quantum divergence.
We proved that $S_{\alpha}(\rho \| \sigma)$ is additive under tensor products and invariant under unitary conjugations, ensuring consistency under composition of independent quantum systems and basis transformations. A distinctive feature of the proposed divergence is its invariance under independent positive rescaling of both arguments, a property not shared by $f$-divergences. This invariance highlights a fundamentally new structural behavior, showing that $S_{\alpha}(\rho \| \sigma)$ depends only on the intrinsic spectral and coherence structure of the states rather than on their absolute normalization.

Unlike conventional quantum divergences, $S_{\alpha}(\rho \| \sigma)$ is not jointly convex. This motivated the development of a generalized convexity framework adapted to its multiplicative structure. Our analysis distinguishes the space of density matrices from the classical probability simplex, where such generalized convexity is always valid. For the quantum state space, it holds only on a restricted but well-defined subclass of density operators. This observation naturally suggests an alternative route toward a generalized data processing inequality for $S_{\alpha}(\rho \| \sigma)$, a direction that we plan to pursue in future work.

We further positioned $S_{\alpha}(\rho \| \sigma)$ within the broader landscape of R\'enyi-type quantum divergences. Under suitable invertible transformations, we showed that it recovers the Petz–R\'enyi relative entropy, and we established necessary and sufficient conditions relating it to the min-relative entropy. These results clarify both the connections and the essential differences between the proposed divergence and existing quantum information measures.

A central result of this paper is the reduction of quantum relative $\alpha$-entropy to its classical counterpart via the Nussbaum–Szko\l{}a distributions. We proved that $S_{\alpha}(\rho \| \sigma)$ can be expressed exactly as a classical relative-$\alpha$-entropy between probability distributions derived from the spectral data of the relative modular operator. This representation demonstrates that quantum distinguishability, even in the noncommuting case, admits a faithful classical description based on physically realizable measurement statistics, with the commuting case recovered as a special instance.

Finally, motivated by the structural form of relative $\alpha$-entropy, we introduced a new Bregman-type quantum divergence inspired by the classical density power divergence. We analyzed its relationship with $S_{\alpha}(\rho \| \sigma)$ and highlighted both shared structural features and key differences. Together, these results establish quantum relative $\alpha$-entropy as a unifying framework connecting R\'enyi-type divergences, Bregman divergences, and robust information measures, opening new directions for quantum information geometry and quantum statistical inference.

\bibliographystyle{IEEEtranS}
\bibliography{ReferencesQIT}

@ARTICLE{BasuHHJ98J,
  author  = {A. Basu and I. R. Harris and N. L. Hjort and M. C. Jones},
  title   = {Robust and efficient estimation by minimizing a density power divergence},
  journal = {Biometrika}, 
  volume  = {85},
  year    = {1998},
  pages   = {549--559}
}

@ARTICLE{KumarM20J,
  author  = {M. A. Kumar and K. V. Mishra},
  title   = {Cram\'er–{R}ao lower bounds arising from generalized {C}sisz\'ar divergences},
  journal = {Info. Geo.}, 
  volume  = {3},
  year    = {2020},
  pages   = {33--59}
}

@ARTICLE{GayenK21J,
  author  = {A. Gayen and M. A. Kumar} ,
  title   = {Projection theorems and estimating equations for power-law models},
  journal = {Journal of Multivariate Analysis}, 
  volume  = {184},
  year    = {2021},
  pages   = {104734}
}

@ARTICLE{KumarS15J1,
  author  = {M. A. Kumar and R. Sundaresan},
  title   = {Minimization problems based on relative $\alpha$-entropy~I: Forward Projection},
  journal = {IEEE Transactions on Information Theory}, 
  volume  = {61},
  year    = {2015},
  pages   = {5063--5080}
}

@ARTICLE{KumarS15J2,
  author  = {M. A. Kumar and R. Sundaresan},
  title   = {Minimization problems based on relative $\alpha$-entropy~II: Reverse Projection},
  journal = {IEEE Transactions on Information Theory}, 
  volume  = {61},
  year    = {2015},
  pages   = {5081--5095}
}

@ARTICLE{KumarS16J,
  author  = {M. A. Kumar and I. Sason},
  title   = {Projection theorems for the R\'enyi divergence on alpha-convex sets},
  journal = {IEEE Transactions on Information Theory}, 
  volume  = {62},
  year    = {2016},
  pages   = {4924--4935}
}

@ARTICLE{GayenRG24J,
  author = {A. Gayen and S. Roy and A. K. Gangopadhyay},
  title = {A unified approach to the Pythagorean identity and projection theorem for a class of divergences based on M-estimations},
  journal = {Statistics},
  volume = {58},
  year = {2024},
  pages = {842--880}
}

@ARTICLE{JonesHHB01J,
  author  = {M. C. Jones and N. L. Hjort and I. R. Harris and A. Basu},
  title   = {A comparison of related density based minimum divergence estimators},
  journal = {Biometrika}, 
  volume  = {88},
  year    = {2001},
  pages   = {865--873}
}

@ARTICLE{GayenK23J,
  author  = {A. Gayen and M. A. Kumar},
  title   = {Generalized {F}isher-{D}armois-{K}oopman-{P}itman Theorem and {R}ao-{B}lackwell Type Estimators for Power-Law Distributions},
  journal = {IEEE Transactions on Information Theory},
  volume = {69},
  year = {2023},
  pages = {7565--7583}
}

@ARTICLE{BasuBC97J,
  author  = {A. Basu and S. Basu and G. Chaudhury},
  title   = {Robust minimum divergence procedures for count data models},
  journal = {Sankhya: The Indian Journal of Statistic}, 
  volume  = {59},
  year    = {1997},
  pages   = {11--27}
}

@ARTICLE{Umegaki62,
  author = {H. Umegaki},
  title = {{Conditional expectation in an operator algebra. IV. Entropy and information}},
  journal = {Kodai Mathematical Seminar Reports},
  volume = {14},
  year = {1962},
  pages = {59--85},
}

@ARTICLE{OgawaNagaoka00,
  author = {T. Ogawa and H. Nagaoka},
  title = {Strong converse and Stein's lemma in quantum hypothesis testing},
  journal = {IEEE Transactions on Information Theory}, 
  volume = {46}, 
  year = {2000},
  pages = {2428-2433},
}

@ARTICLE{Kullback51,
  author = {S. Kullback and R. A. Leibler},
  title = {{On Information and Sufficiency}}, 
  journal = {The Annals of Mathematical Statistics},
  volume = {22},
  year = {1951},
  pages = {79--86},
}

@BOOK{Pardo06B,
  author = {L. Pardo},
  title  = {Statistical Inference Based on Divergence Measures},
  publisher = {Chapman \& Hall/CRC, Taylor	and Francis group, Boca Raton, Florida, USA},
  year = {2006}
}

@ARTICLE{CressieR84J,
  author  = {N. Cressie and  T. R. C. Read},
  title   = {Multinomial Goodness-of-Fit Tests},
  journal = {J. R. Stat. Soc. Ser. B. Stat. Methodol.}, 
  volume  = {46},
  year    = {1984},
  pages   = {440--464}
}

@ARTICLE{Tsallis88J,
  author  = {C. Tsallis},
  title   = {Possible generalization of Bolzmann-Gibbs statistics},
  journal = {J. Stat. Phys.}, 
  volume  = {52},
  year    = {1988},
  pages   = {479--487}
}

@ARTICLE{TsallisMP98J,
  author  = {C. Tsallis and R. S. Mendes and A. R. Plastino},
  title   = {The role of constraints within generalized non-extensive statistics},
  journal = {Phys. A.}, 
  volume  = {261},
  year    = {1998},
  pages   = {534--554}
}

@ARTICLE{Araki1975,
  author={H. Araki},
  title={Relative Entropy of States of von Neumann Algebras},
  journal={Publications of The Research Institute for Mathematical Sciences},
  volume={11},
  year={1975},
  pages={809-833}
}

@ARTICLE{Araki2005,
  author={H. Araki},
  title={Relative Entropy for States of von Neumann Algebras II},
  journal={Publications of The Research Institute for Mathematical Sciences},
  year={2005}
}

@ARTICLE{MullerMartin2013,
    author = {M. Müller-Lennert and F. Dupuis and O. Szehr and S. Fehr and M. Tomamichel},
    title = {On quantum Rényi entropies: A new generalization and some properties},
    journal = {Journal of Mathematical Physics},
    volume = {54},
    year = {2013},
    pages = {122203}
}

@ARTICLE{Nussbaum2009,
  author = {M. Nussbaum and A. Szko{\l}a},
  title = {The Chernoff lower bound for symmetric quantum hypothesis testing},
  journal = {The Annals of Statistics},
  volume = {37},
  year = {2009},
  pages = {}	
}

@ARTICLE{Hiai2017,
  author = {F. Hiai and M. Mosonyi},
  title = {Different quantum f-divergences and the reversibility of quantum operations},
  journal = {Reviews in Mathematical Physics},
  volume = {29},
  year = {2017},
  pages = {1750023}
}

@ARTICLE{Hiai2018,
  author = {F. Hiai},
  title = {Quantum $f$-divergences in von Neumann algebras. I. Standard $f$-divergences},
  journal = {Journal of Mathematical Physics},
  volume = {59},
  year = {2018},
  pages = {102202}	
}

@ARTICLE{Csiszar1967,
  author={I. Csiszár},
  title={Information-type measures of difference of probability distributions and indirect observation},
  journal={Studia Scientiarum Mathematicarum Hungarica},
  volume={2},
  year={1967},
  pages={229-318}
}

@ARTICLE{Ervan2014,
  author={T. V. Erven and P. Harremos},
  title={Rényi Divergence and Kullback-Leibler Divergence},
  journal={IEEE Transactions on Information Theory}, 
  volume={60},
  year={2014},
  pages={3797-3820}
}

@ARTICLE{Sundaresan2002,
  author={R. Sundaresan},
  title={A measure of discrimination and its geometric properties},
  journal={Proceedings IEEE International Symposium on Information Theory},
  year={2002},
  pages={264-},
}

@ARTICLE{Sundaresan2007,
  author = {R. Sundaresan},
  title = {Guessing Under Source Uncertainty},
  journal = {IEEE Transactions on Information Theory},
  volume = {53},
  year = {2007},
  pages = {269–287}
}

@ARTICLE{Androulakis2024,
  author = {G. Androulakis and T. C. John},
  title = {Quantum f-divergences via Nussbaum–Szkoła distributions and applications to f-divergence inequalities},
  journal = {Reviews in Mathematical Physics},
  volume = {36},
  year = {2024},
  pages = {2360002}
}

@ARTICLE{Datta2009,
  author = {N. Datta},
  title = {Min- and max-relative entropies and a new entanglement monotone},
  journal = {IEEE Transactions on Information Theory},
  volume = {55},
  year = {2009},
  pages = {2816–2826}
}

@ARTICLE{Bluhm2022,
  author={A. Bluhm and {\'A}. Capel and P. Gondolf and A. P. Hern{\'a}ndez},
  title={Continuity of Quantum Entropic Quantities via Almost Convexity},
  journal={IEEE Transactions on Information Theory},
  volume={69},
  year={2022},
  pages={5869-5901}
}

@ARTICLE{Ohya2010,
  author = {M. Ohya and N. Watanabe},
  title = {Quantum Entropy and Its Applications to Quantum Communication and Statistical Physics},
  journal = {Entropy},
  volume = {12},
  year = {2010},
 pages = {1194--1245}
}

@ARTICLE{Petz1986,
  author = {D. Petz},
  title = {Quasi-entropies for finite quantum systems},
  journal = {Reports on Mathematical Physics},
  volume = {23},
  year = {1986},
pages = {57-65}
}

@ARTICLE{Uhlmann77,
  author = {A. Uhlmann},
  title = {Relative entropy and the Wigner-Yanase-Dyson-Lieb concavity in an interpolation theory},
  journal = {Communications in Mathematical Physics},
  volume = {54},
  year = {1977},
pages = {21–32}
}

@ARTICLE{Basu98,
  author = {A. Basu and I. R. Harris and N. L. Hjort and M.C. Jones},
  title = {Robust and efficient estimation by minimising a density power divergence},
  journal = {Biometrika},
  volume = {85},
  year={1998},
  pages={549-559}
}

@ARTICLE{Kanamori14,
  author = {T. Kanamori},
  title = {Scale-Invariant Divergences for Density Functions},
  journal = {Entropy},
  volume = {16},
  year = {2014},
  pages = {2611--2628}
}

@ARTICLE{Naudts04,
  author={J. Naudts},
  title={Estimators, escort probabilities, and $\phi$-exponential families in statistical physics},
  journal = {Journal of Inequalities in Pure \& Applied Mathematics},
  volume = {5},
  year = {2004},
  pages = {102} 
}

@ARTICLE{Basu13,
  author = {A. Basu, A. Mandal, N. Martin and L. Pardo},
  title = {Testing statistical hypotheses based on the density power divergence},
  journal = {Annals of the Institute of Statistical Mathematics},
  volume = {65},
  year = {2013},
  pages = {319-348}
}

@ARTICLE{Riani20,
  author = {M. Riani,A. C. Atkinson, A. Corbellini and D. Perrotta},
  title = {Robust Regression with Density Power Divergence: Theory, Comparisons, and Data Analysis},
  journal = {Entropy},
  volume = {22},
  year = {2020},
  pages = {1099-4300}
}

@ARTICLE{Datta15,
  author = {K. M. R. Audenaert and N. Datta},
  title = {$\alpha-z$-relative Renyi entropies},
  journal = {Journal of Mathematical Physics},
  Volume = {56},
  year = {2015},
  pages = {022202}
}

@ARTICLE{Temme10,
    author = {K. Temme and M. J. Kastoryano and M. B. Ruskai and M. M. Wolf and F. Verstraete},
    title = {The $\chi^2$-divergence and mixing times of quantum Markov processes},
    journal = {Journal of Mathematical Physics},
    volume = {51},
    year = {2010},
    pages = {122201}
}

@ARTICLE{Osaka25,
  author = {H. Osaka and H. Shudo},
  title = {Generalized Quantum Hellinger Divergences Generated by Monotone Functions},
  journal = {Open Systems \& Information Dynamics},
  volume = {32},
  year = {2025},
  pages = {2550013}
}

@ARTICLE{Jozsa94,
  author = {R. Jozsa},
  title = {Fidelity for Mixed Quantum States},
  journal = {Journal of Modern Optics},
  volume = {41},
  year = {1994},
  pages = {2315--2323}
}

@CONFERENCE{Renyi61J,
  author  = {A. R\'{e}nyi},
  title   = {On measures of entropy and information},
  booktitle = {Proceedings of 4th Berkeley Symposium on Mathematical Statistics and Probability, Berkeley, California, USA}, 
  year    = {1961},
  pages   = {547--561}
}

@BOOK{Nielsen00,
  author = {M. A. Nielsen and I. L. Chuang},
  title = {Quantum Computation and Quantum Information},
  address = {Cambridge},
  publisher = {Cambridge University Press},
  year = {2000}
}

@BOOK{Marcol2015,
  author = {M. Tomamichel},
  title = {Quantum Information Processing with Finite Resources -- Mathematical Foundations},
  address = {Cham},
  publisher = {Springer Cham},
  year = {2015}
}

@BOOK{Denes2007,
  author = {D. Petz},
  title = {Quantum information theory and quantum statistics},
  address = {Heidelberg},
  publisher = {Springer},
  year = {2007}
}

@BOOK{BasuSP11B,
  author = {A. Basu and H. Shioya and C. Park},
  title  = {Statistical Inference: The Minimum Distance Approach},
  publisher = {Chapman \& Hall/ CRC Monographs on Statistics and Applied Probability 120},
  year = {2011}
}

@BOOK{CsiszarS04B,
  author = {I. Csisz\'ar and P. C. Shields},
  title  = {Information Theory and Statistics: A Tutorial},
  address = {Hanover},
  publisher = {Foundations and Trends in Communications and Information Theory},
  year = {2004}
}

@BOOK{zhang11,
  author = {F. Zhang},
  title = {Matrix Theory: Basic Results and Techniques},
  address = {New York},
  publisher = {Springer},
  year = {2011}
}

@BOOK{Parthasarathy06,
  author = {K. R. Parthasarathy},
  title = {Lectures on Quantum computation, quantum error correcting codes and information theory},
  address = {New Delhi, India},
  publisher = {Narosa Pub.},
  year = {2006}
}

\end{document}